\newcommand{\F}{\ensuremath{\mathcal{F}}}
\newcommand{\Prof}{\ensuremath{\mathcal{P}}}
\newcommand{\Leaves}{\ensuremath{\mathcal{L}}}
\newtheorem{theorem}{Theorem}
\newtheorem{lemma}{Lemma}
\newtheorem{observation}{Observation}
\theoremstyle{remark}
\theoremstyle{definition}
\newtheorem{example}{Example}
\newtheoremstyle{dotless}{}{}{\itshape}{}{\bfseries}{}{ }{}
\theoremstyle{dotless}
\begin{document}
\title{On Two Graph-Theoretic Characterizations of Tree Compatibility\thanks{This work was supported in part by the National Science Foundation under grants CCF-1017189 and DEB-0829674.}}
\author{Sudheer Vakati\footnote{Department of Computer Science, Iowa State University, Ames, IA 50011} \and David Fern\'{a}ndez-Baca\footnotemark[\value{footnote}]}
\date{}
\maketitle

\begin{abstract}
Deciding whether a collection of unrooted trees is compatible is a fundamental problem in phylogenetics. Two different graph-theoretic characterizations of tree compatibility have recently been proposed. In one of these, tree compatibility is characterized in terms of the existence of a specific kind of triangulation in a structure known as the display graph. An alternative characterization expresses the tree compatibility problem as a chordal graph sandwich problem in a structure known as the edge label intersection graph. In this paper we show that the characterization using edge label intersection graphs transforms to a characterization in terms of minimal cuts of the display graph. We show how these two characterizations are related to compatibility of splits. We also show how the characterization in terms of minimal cuts of display graph is related to the characterization in terms of triangulation of the display graph.
\end{abstract}

\section{Introduction}
A \emph{phylogenetic tree} $T$ is an unrooted tree whose leaves are bijectively mapped to label set $\Leaves(T)$. Labels represent species and phylogenetic trees represent evolutionary history of species. Let $\Prof$ be a collection of phylogenetic trees. We call $\Prof$ a \emph{profile}. We denote $\bigcup_{T \in \Prof}\Leaves(T)$ by $\Leaves(\Prof)$. A \emph{supertree} of $\Prof$ is a phylogenetic tree whose label set is $\Leaves(\Prof)$.

Let $S$ be a phylogenetic tree. For any $Y \subseteq \Leaves(S)$ let $S|Y$ denote the tree obtained by suppressing any degree two vertices in the minimal subtree of $S$ connecting the labels in $Y$. Let $T$ be a phylogenetic tree where $\Leaves(T) \subseteq \Leaves(S)$. We say that $S$ \emph{displays} $T$ if $T$ can be derived from $S|\Leaves(T)$ by contraction of edges. Given a profile $\Prof$ of trees, the \emph{tree compatibility} problem asks if there exists a supertree of $\Prof$ which displays all the trees in $\Prof$. If such a supertree $S$ exists, we say that $\Prof$ is compatible and $S$ is a compatible tree of $\Prof$. The tree compatibility problem is NP-complete~\cite{Steel92} but fixed parameter tractable when parametrized by number of trees~\cite{BryantLagergren06}.

An instance of the tree compatibility problem can be transformed to an instance of character compatibility by representing each input tree by its set of splits~\cite{SempleSteel03}. A \emph{quartet} is a binary phylogenetic tree with exactly four leaves. An instance of tree compatibility can also be transformed into an instance of quartet compatibility~\cite{Steel92}. Grunewald et al.~\cite{Grunewald2008} characterized quartet compatibility problem in terms of finding an unification sequence for a structure called the quartet graph.

Vakati and Fern\'{a}ndez-Baca characterized the tree compatibility problem in terms of finding a legal triangulation~\cite{Vakati11} of the display graph of a profile, a graph introduced by Bryant and Lagergren~\cite{BryantLagergren06}. Gysel et al. introduced the edge label intersection graph for a profile of phylogenetic trees and used this graph to characterize tree compatibility as a chordal sandwich problem~\cite{Gysel2012}. Here we show that the latter characterization translates to a characterization in terms of minimal cuts of display graphs. We also show how such cuts are closely related to the splits of the compatible supertree. Finally, we show how these two characterizations relate to the legal triangulation characterization given in~\cite{Vakati11}.

\section{Preliminaries}

 For every nonnegative integer $m$, we denote the set $\{1, \dots, m\}$ by $[m]$. Let $G$ be a graph. We represent the vertices and edges of $G$ by $V(G)$ and $E(G)$ respectively. For any $U \subseteq V(G)$, $G-U$ represents the graph derived by removing vertices of $U$ and their incident edges from $G$. Similarly, for any $F \subseteq E(G)$, $G-F$ represents the graph with vertex set $V(G)$ and edge set $E(G) \setminus F$. For any vertex $v \in V(G)$, we denote the set $\{x: \{x, v\} \in E(G)\}$ by $N_G(v)$.

For any two nonadjacent vertices $a$ and $b$ of $G$, an $a$-$b$ $separator$ $U$ of $G$ is a set of vertices whose removal disconnects $G$ and $a$ and $b$ are in different connected components of $G-U$. An $a$-$b$ separator $U$ is \emph{minimal} if for any $U' \subset U$, $U'$ is not an $a$-$b$ separator. A set $U \subseteq V(G)$ is a \emph{minimal separator} if $U$ is a minimal $a$-$b$ separator for some nonadjacent vertices $a$ and $b$ of $G$. Two minimal separators $U$ and $U'$ are \emph{parallel} if $G-U$ contains at most one component $H$ where $V(H) \cap U' \neq \emptyset$. We represent the set of all minimal separators of graph $G$ by $\triangle_G$.

Assume that $G$ is connected. A \emph{cut} is a set of edges $F \subseteq E(G)$ whose removal disconnects $G$. A cut $F$ is \emph{minimal} if there does not exist $F' \subset F$ where  $G-F'$ is disconnected. Note that if $F$ is minimal, there will exactly be two connected components in $G-F$. Two minimal cuts $F$ and $F'$ are \emph{parallel} if $G-F$ has at most one connected component $H$ where $E(H) \cap F' \neq \emptyset$. 

A \emph{chord} is an edge between two nonadjacent vertices of a cycle. A graph $H$ is \emph{chordal} if and only if every cycle of length four or greater in $H$ has a chord. A chordal graph $H$ is a \emph{triangulation} of graph $G$ if and only if $V(G) = V(H)$ and $E(G) \subseteq E(H)$. The edges in $E(H) \setminus E(G)$ are called \emph{fill-in} edges of $G$. A \emph{clique tree} of a chordal graph $H$ is a pair $(T, B)$ where (i) $T$ is a tree, (ii) $B$ is a bijective function from vertices of $T$ to maximal cliques of $H$, and (iii) $(T, B)$ satisfies the \emph{coherence} property; i.e., for every vertex $v \in H$, the set of all vertices $x$ of $T$ where $v \in B(x)$ induces a subtree in $T$.

Let $G$ be a graph and let $\F$ be a collection of subsets of $V(G)$. We represent by $G_{\F}$ the graph derived from $G$ by making the set of vertices of $X$ a clique in $G$ for every $X \in \F$. 

\begin{theorem}[\cite{Bouchitte2001,Heggernes2006,Parra1997}]
\label{thm:parallel_minimal_ct}
Let $\F$ be a maximal set of pairwise parallel minimal separators of $G$ and let $H$ be a minimal triangulation of $G$.
\begin{enumerate}
\item $G_{\F}$ is a minimal triangulation of $G$.
\item Let $(T, B)$ be a clique tree of $G_{\F}$. There exists a minimal separator $F \in \F$ if and only if there exist two adjacent vertices $x$ and $y$ in $T$ where $B(x) \cap B(y) = F$.
\item $\triangle_{H}$ is a maximal set of pairwise parallel minimal separators of $G$ and $G_{\triangle_H} = H$.
\end{enumerate}
\end{theorem}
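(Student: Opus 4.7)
Since this is a synthesis of results from \cite{Bouchitte2001,Heggernes2006,Parra1997}, I would prove the three parts together, but they naturally rely on a common toolkit: the bijection between minimal separators of a chordal graph and edges of any of its clique trees, plus the observation that in any minimal triangulation $H$ of $G$, every minimal separator of $H$ is also a minimal separator of $G$. The plan is to first establish a few auxiliary properties about parallel minimal separators, then prove parts 1 and 2 in tandem, and finally deduce part 3 from them.

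First I would record two easy facts about $\triangle_G$ that drive everything: (i) if $S$ is a minimal $a$-$b$ separator of $G$ then each full component of $G-S$ (a component whose neighborhood contains all of $S$) actually realizes $S$ as the neighborhood of the component; and (ii) two minimal separators $S, S'$ of $G$ are parallel in the sense defined if and only if each lies in the closure of a single full component of the other. The second fact makes ``pairwise parallel'' behave like a laminar/nested condition, which is what lets one glue the associated cliques together without creating long chordless cycles.

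For part 1 I would build the triangulation bottom-up. Starting from $G$, pick any $F \in \F$, fill it to a clique, and argue via the parallelism condition that the resulting graph $G_{\{F\}}$ still has every element of $\F \setminus \{F\}$ as a minimal separator, and that the family $\F \setminus \{F\}$ is still pairwise parallel in $G_{\{F\}}$. Iterating, $G_\F$ is built by a sequence of ``saturate one minimal separator'' steps, each of which is known to preserve minimality of the resulting fill (Parra--Scheffler). Chordality would then be shown by induction: any chordless cycle of length $\ge 4$ in $G_\F$ would have to cross some $F \in \F$, but since $F$ is a clique in $G_\F$ two nonconsecutive vertices of the cycle in $F$ would form a chord, contradiction. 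Minimality follows because removing any fill-in edge $e$ re-exposes a chordless cycle through the separator that forced $e$ in the first place.

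Having part 1, part 2 is then an instance of the standard clique-tree characterization of minimal separators in chordal graphs: in any chordal graph $K$, a set $S \subseteq V(K)$ is a minimal separator of $K$ if and only if $S = B(x) \cap B(y)$ for some edge $xy$ of a clique tree $(T,B)$ of $K$. Applied to $K = G_\F$ together with the identification $\triangle_{G_\F} = \F$ produced by the construction in part 1, this yields the stated equivalence. For part 3, I would observe that for a minimal triangulation $H$ of $G$ every $S \in \triangle_H$ is also a minimal separator of $G$ (this is the classical Rose--Tarjan--Lueker style fact), and that elements of $\triangle_H$ are automatically pairwise parallel in $G$ because in the chordal graph $H$ they already are. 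Maximality is then forced: any minimal separator of $G$ parallel to every $S \in \triangle_H$ but missing from $\triangle_H$ would allow a strictly smaller triangulation than $H$, contradicting minimality. Finally, applying part 1 to $\F = \triangle_H$ yields $G_{\triangle_H}$ as a minimal triangulation with the same minimal separators as $H$, and two minimal triangulations of $G$ with the same set of minimal separators must coincide, so $G_{\triangle_H} = H$.

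The main obstacle I expect is proving that $G_\F$ is chordal, specifically the step that rules out chordless $4^+$-cycles in the saturated graph without circular reasoning. The parallelism hypothesis must be used in its strong form here: a would-be chordless cycle would have to alternate between the ``sides'' of several separators in $\F$, and one needs the nested/laminar flavor of pairwise parallelism to force two of its vertices to share membership in some single $F \in \F$, producing the missing chord. Getting this argument right cleanly, rather than by an opaque case analysis, is where I would spend most of the effort.
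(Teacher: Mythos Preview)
The paper does not prove this theorem at all: it is stated as a known result, attributed to \cite{Bouchitte2001,Heggernes2006,Parra1997}, and used as a black box throughout the rest of the paper. There is therefore no proof in the paper to compare your proposal against.

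That said, your plan is a faithful outline of how these results are established in the cited literature. The core ingredients you identify---the bijection between minimal separators of a chordal graph and the intersections along clique-tree edges, the fact that minimal separators of a minimal triangulation are minimal separators of the original graph, and the Parra--Scheffler saturation argument---are exactly the ones used in those references. Your anticipated difficulty (showing chordality of $G_{\F}$ cleanly) is real, and in the literature it is typically handled not by a direct chordless-cycle argument but by showing that the maximal potential cliques associated to a maximal parallel family assemble into a clique tree, from which chordality is immediate; you might find that route cleaner than the cycle-chasing you describe.
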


 Let $T$ be a phylogenetic tree over label set $\Leaves(T)$. Since there exists a bijective function from leaves of $T$ to $\Leaves(T)$, we will represent leaves of $T$ by their labels.  Let $\Prof = \{T_1, T_2, \cdots, T_k\}$ be a profile of $k$ phylogenetic trees. The \emph{display graph} of profile $\Prof$, denoted by $G(\Prof)$, is a graph whose vertex set is $\bigcup_{i \in [k]} V(T_i)$ and edge set is $\bigcup_{j \in [k]} E(T_j)$. An example of a display graph is given in Fig.~\ref{fig:example}. A vertex $v$ of $G(\Prof)$ is a \emph{leaf} vertex if $v \in \Leaves(\Prof)$. Every other vertex of $G(\Prof)$ is an \emph{internal} vertex. An edge of $G(\Prof)$ is an \emph{internal} edge if both its endpoints are internal; otherwise, it is a \emph{non-internal} edge. Let $H$ be a subgraph of $G(\Prof)$. We represent by $\Leaves(H)$, the set of all leaf vertices of $H$. A triangulation $G'$ of $G(\Prof)$ is \emph{legal} if it satisfies the following conditions.

\begin{enumerate} [(LT1)]
\item For every clique $C$ of $G'$, if $C$ contains an internal edge, then it cannot contain any other edge of $G(\Prof)$.  
\item There does not exist a fill-in edge with a leaf vertex as an endpoint.
\end{enumerate}

\begin{theorem}[Vakati, Fern\'{a}ndez-Baca~\cite{Vakati11}]
\label{thm:lt}
A profile $\Prof$ of unrooted phylogenetic trees is compatible if and only if $G(\Prof)$ has a legal triangulation.
\end{theorem}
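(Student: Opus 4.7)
The plan is to prove both directions constructively, with a clique tree serving as the bridge between a legal triangulation of $G(\Prof)$ and a compatible supertree of $\Prof$.

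For ($\Rightarrow$), suppose $\Prof$ admits a compatible supertree $S$. For each $T_i \in \Prof$, I would fix a display of $T_i$ in $S$, which assigns to every vertex $v$ of $T_i$ a connected subtree of $S$ and to every edge of $T_i$ a corresponding edge of $S|\Leaves(T_i)$. I would then build $G'$ from $G(\Prof)$ by adding fill-in edges so that, for each vertex $s$ of $S$, the collection $C_s$ of vertices of $G(\Prof)$ that the displays place at $s$ becomes a clique. The tree structure of $S$ makes the $C_s$ the bags of a clique tree of $G'$, so $G'$ is chordal. Property (LT2) holds because a leaf $\ell$ of $G(\Prof)$ is placed only at the corresponding leaf $\ell$ of $S$, so no fill-in edges are introduced at $\ell$. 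Property (LT1) holds because an internal edge of some $T_i$ lies inside $C_s$ only when both of its endpoints are placed at $s$; any other edge of $G(\Prof)$ touching such an endpoint must then be placed at $s$ together with a different neighbor in $S$, and cannot lie in $C_s$.

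For ($\Leftarrow$), take a legal triangulation $G'$ of $G(\Prof)$ and any clique tree $(T, B)$ of $G'$. I would build a candidate supertree $S$ by using $T$ as its internal skeleton and, for each leaf $\ell$ of $G(\Prof)$, attaching $\ell$ as a pendant to some node of the connected subtree $T_\ell = \{x \in V(T) : \ell \in B(x)\}$; condition (LT2) ensures that $\ell$ has no fill-in neighbors, so this attachment is well-defined and yields exactly one copy of $\ell$. After suppressing degree-two internal vertices, $S$ is a phylogenetic tree on label set $\Leaves(\Prof)$. To show $S$ displays each $T_i$, I would associate to every vertex $v$ of $T_i$ the connected subtree $T_v \subseteq T$ and to every edge $\{u, w\} \in E(T_i)$ the path in $T$ from $T_u$ to $T_w$. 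Condition (LT1) is the critical tool: it forbids any bag along such a path from containing any edge of $G(\Prof)$ other than $\{u, w\}$ itself, so the paths associated to distinct edges of $T_i$ are internally disjoint and, after degree-two suppression and the appropriate edge contractions in $S|\Leaves(T_i)$, recover $T_i$ exactly.

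The main obstacle, I expect, is this backward direction and especially the bookkeeping needed to verify that the routed paths interact correctly at their branching points, namely sharing exactly the endpoints prescribed by $T_i$ and nothing more. Condition (LT1) is precisely what rules out pathological mergers of two distinct edges of $T_i$ into a common maximal clique of $G'$; without it the routed paths could be glued in ways not permitted by $T_i$, yielding an $S$ that is strictly more resolved than $T_i$ and therefore fails to display $T_i$. Condition (LT2) plays the complementary role of keeping leaves pendant with the correct multiplicity. Once these interactions are pinned down, recovering each $T_i$ via contraction in $S|\Leaves(T_i)$ is routine.
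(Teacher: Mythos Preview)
This theorem is not proved in the present paper; it is quoted from~\cite{Vakati11} and used here only as a black box, so there is no in-paper proof to compare your proposal against.

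Your outline is the natural one and is in the spirit of the original argument: build a tree decomposition of $G(\Prof)$ from a compatible supertree $S$ via the display maps in one direction, and read off a supertree from a clique tree of a legal triangulation in the other. One point in your forward direction deserves more care than the sketch gives it. If $C_s$ is defined as the set of vertices of $G(\Prof)$ whose display-subtree contains $s$, then for an edge $\{u,v\}$ of $T_i$ the subtrees of $u$ and $v$ are typically vertex-disjoint in $S$ (joined by a single edge), so no bag $C_s$ contains both $u$ and $v$ and you do not yet have a tree decomposition of $G(\Prof)$. The standard fix---subdividing edges of $S$, or extending one subtree across the joining edge---must then be done so that two internal edges from \emph{different} input trees whose images in $S$ share an edge do not land together in the same bag, or (LT1) is violated. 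Your justification of (LT1), ``any other edge of $G(\Prof)$ touching such an endpoint must then be placed at $s$ together with a different neighbor in $S$,'' only treats edges of $G(\Prof)$ incident to $u$ or $v$; it does not exclude an unrelated edge $\{p,q\}$ of some $T_j$ from lying in the same clique. That cross-tree case is where the actual work in the construction sits.
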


The \emph{edge label intersection} graph (see Fig.~\ref{fig:example}) of a profile $\Prof$, denoted $L(\Prof)$, is the graph whose vertex set is the set of all edges of input trees of $\Prof$ where there is an edge between vertices $e$ and $e'$ if and only if $e \cap e' \neq \emptyset$~\cite{Gysel2012}. (Note that in~\cite{Gysel2012}, $L(\Prof)$ is the modified edge label intersection graph.) 

It can be verified that $L(\Prof)$ is the line graph of $G(\Prof)$~\cite{Gysel2012}; i.e., the vertices of $L(\Prof)$ are the edges of $G(\Prof)$ and two vertices in $L(\Prof)$ are adjacent if the corresponding edges in $G(\Prof)$ share a common endpoint. 

A fill-in edge of $L(\Prof)$ is \emph{valid} if both its endpoints are not from $L(T)$ for some input tree $T$. A triangulation $H$ of $L(\Prof)$ is \emph{restricted} if every fill-in edge of $H$ is valid.

\begin{theorem}[Gysel, Stevens, Gusfield~\cite{Gysel2012}]
\label{thm:EL_compatibility}
A Profile $\Prof$ of unrooted phylogenetic trees is compatible if and only if there exists a restricted triangulation of $L(\Prof)$.
\end{theorem}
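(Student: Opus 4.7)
The plan is to derive Theorem~\ref{thm:EL_compatibility} via Theorem~\ref{thm:lt}, by showing that $G(\Prof)$ admits a legal triangulation if and only if $L(\Prof)$ admits a restricted triangulation. The two directions are handled by dual constructions based on clique trees.

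For the direction from a restricted triangulation of $L(\Prof)$ to a legal triangulation of $G(\Prof)$, let $H$ be a restricted triangulation of $L(\Prof)$ with clique tree $(T_H, B_H)$. I define $V_x = \bigcup_{e \in B_H(x)} e$ for each node $x$ of $T_H$, so each $V_x$ is a set of vertices of $G(\Prof)$. The pair $(T_H, V)$ is a tree decomposition of $G(\Prof)$: edge coverage is immediate, and the running-intersection property at a vertex $v$ holds because the edges of $G(\Prof)$ incident to $v$ form a clique in $L(\Prof) \subseteq H$, so their bag-subtrees pairwise intersect in $(T_H, B_H)$ and thus their union is a subtree. Let $G'$ be the triangulation of $G(\Prof)$ obtained by making each $V_x$ a clique. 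Legality is the technical heart of the argument: assuming toward contradiction that some clique $C \subseteq V_x$ contains an internal edge $e_1 = \{a, b\}$ together with another $G(\Prof)$-edge $e_2$, the fact that $a$ and $b$ are internal confines them to a single input tree $T_i$; hence any edges $f_a, f_b \in B_H(x)$ incident to $a$ and $b$ lie in $T_i$, and combining restrictedness of $H$ with the acyclicity of $T_i$ forces $e_1 \in B_H(x)$ and rules out the second $G(\Prof)$-edge $e_2$. The condition LT2 is verified by a similar case analysis: a fill-in edge of $G'$ incident to a leaf $d$ would require edges of $B_H(x)$ at $d$ and at its companion vertex to share a common endpoint in a way that either collides with restrictedness or with the tree structure of an input tree.

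For the forward direction, starting from a legal triangulation $G'$ of $G(\Prof)$ with clique tree $(T, B)$, I construct a triangulation $H$ of $L(\Prof)$ as a subtree intersection graph: to each edge $e = \{u, v\} \in E(G(\Prof))$ I associate a subtree $S_e$ of $T$ that combines the bags containing $u$ or $v$ with the bags containing the pair $\{u, v\}$, chosen so that whenever $e, e'$ share an endpoint in $G(\Prof)$ their subtrees intersect. The resulting intersection graph $H$ is chordal by construction and contains $L(\Prof)$ by design, hence is a triangulation of $L(\Prof)$. Restrictedness follows from the observation that a fill-in edge $\{e, e'\} \in H$ with $e, e' \in E(T_i)$ for the same input tree would force two vertex-disjoint edges of $T_i$ to have all four endpoints in a common bag $B(x)$ of $G'$, i.e., in a common clique of $G'$. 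A case analysis then yields a contradiction: if either $e$ or $e'$ is internal, LT1 forbids another $G(\Prof)$-edge in the clique; if both $e$ and $e'$ are non-internal, the leaf endpoints together with LT2 force all their incident edges within the clique to be in $G(\Prof)$, but a leaf has only one neighbor in $T_i$, so the required co-occurrences cannot all be realised by edges of distinct input trees without identifying an internal vertex across trees, which is impossible.

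The main obstacle is engineering the subtree assignment $S_e$ in the forward direction so that $L(\Prof) \subseteq H$ without inflating $H$ with invalid fill-in: the naive choices $T_u \cap T_v$ and $T_u \cup T_v$ fail in opposite directions, the former missing adjacencies of $L(\Prof)$ when LT1 separates incident edges into different cliques, the latter introducing fill-in between edges of the same input tree. The key structural facts used throughout are that an internal vertex of $G(\Prof)$ lies in exactly one input tree, that leaves may be shared across multiple input trees, and that each input tree is acyclic; these are precisely the structural contents encoded by LT1 and LT2, which is why they translate cleanly into the restrictedness condition on $L(\Prof)$.
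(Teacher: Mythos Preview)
The paper does not prove Theorem~\ref{thm:EL_compatibility}; it is quoted as a result of Gysel, Stevens, and Gusfield~\cite{Gysel2012} and used as a black box. So there is no ``paper's proof'' to compare against, and your attempt must stand on its own.

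Your backward direction (restricted triangulation of $L(\Prof)$ $\Rightarrow$ legal triangulation of $G(\Prof)$) has a genuine gap. The construction $V_x=\bigcup_{e\in B_H(x)} e$ need not satisfy (LT1). Take a single input tree $T$ that is the path $a\,$--$\,b\,$--$\,c\,$--$\,d$ with leaves $a,d$. Then $L(\Prof)=L(T)$ is the path $\{a,b\}\,$--$\,\{b,c\}\,$--$\,\{c,d\}$, already chordal, so $H=L(\Prof)$ is a restricted triangulation. A clique tree has a node $x$ with $B_H(x)=\{\{a,b\},\{b,c\}\}$, giving $V_x=\{a,b,c\}$. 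In $G'$ the clique $\{a,b,c\}$ contains the internal edge $\{b,c\}$ together with the $G(\Prof)$-edge $\{a,b\}$, violating (LT1). Your argument that ``restrictedness of $H$ with the acyclicity of $T_i$ \ldots\ rules out the second $G(\Prof)$-edge $e_2$'' correctly forces $e_1\in B_H(x)$, but it does nothing to exclude an $e_2$ that shares an endpoint with $e_1$; restrictedness only forbids \emph{non-incident} same-tree edges from co-occurring in a bag. The passage from bags of $H$ to legal cliques of $G(\Prof)$ requires finer surgery than ``take the union of endpoints''; compare the elaborate construction in Section~5 of the paper, which builds a legal triangulation from the cut side and is far from a one-line bag translation.

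Your forward direction is only a plan: you correctly observe that both $T_u\cap T_v$ and $T_u\cup T_v$ fail as subtree assignments, but you do not actually specify an $S_e$ that works, nor argue that one exists. Without that, the direction is unproved.
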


For rest of the paper we will assume that for any profile $\Prof$, $G(\Prof)$ is connected. Otherwise, there exists a partition $\Prof_{part}$ of $\Prof$ such that for every $P \in \Prof_{part}$, $G(P)$ is connected and $P$ is maximal such set. Then, $\Prof$ is compatible if and only if $P$ is compatible for every $P \in \Prof_{part}$. Note that, if $G(\Prof)$ is connected, then $L(\Prof)$ is also connected.
 
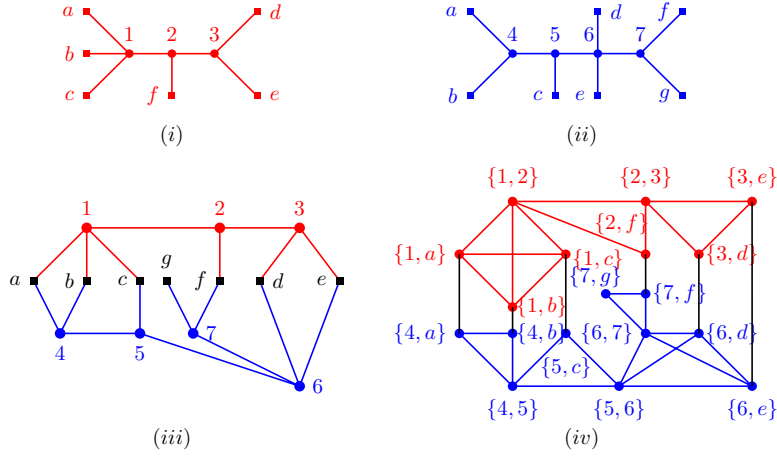
\begin{figure}[t!]
\renewcommand{\figurename}{Fig.}
 \begin{center}
\begin{tikzpicture} [inner sep = 2mm, semithick,scale=0.7]
\tikzstyle{every node}=[font=\fontsize{10}{10}, scale=0.8];

\coordinate (one_a) at (1,9);
\begin{scope}[red,scale=0.8]
\draw (one_a) --++(1, -1) coordinate(one_1) --++(-1, 0) coordinate(one_b) ++(1,0) --++(-1,-1) coordinate(one_c)  ++(1,1) --++(1,0) coordinate(one_2) --++(0,-1) coordinate(one_f) ++(0,1) --++(1,0) coordinate (one_3) --++(1,1) coordinate(one_d) ++(-1,-1) --+(1,-1) coordinate(one_e);
\path [fill] (one_1) circle(0.1) node[above] {$1$};
\path [fill] (one_2) circle(0.1) node[above] {$2$};
\path [fill] (one_3) circle(0.1) node[above] {$3$};
\path [fill] (one_a) node[left] {$a$} ++(-0.075,-0.075)  rectangle +(0.15,0.15);
\path [fill] (one_b) node[left] {$b$} ++(-0.075,-0.075)  rectangle +(0.15,0.15);
\path [fill] (one_c) node[left] {$c$} ++(-0.075,-0.075)  rectangle +(0.15,0.15);
\path [fill] (one_d) node[right]{$d$}++(-0.075,-0.075)  rectangle +(0.15,0.15);
\path [fill] (one_e) node[right]{$e$}++(-0.075,-0.075)  rectangle +(0.15,0.15);
\path [fill] (one_f) node[left]{$f$}++(-0.075,-0.075)  rectangle +(0.15,0.15);
\end{scope}

\path(one_f) ++(0, -0.75) coordinate(i)node{$(i)$};
\path (one_d) ++(4,0) coordinate(two_a);
\begin{scope}[blue,scale=0.8]
\draw (two_a) --++(1,-1) coordinate(two_4) --++(-1,-1) coordinate(two_b) ++(1,1) --++(1,0) coordinate(two_5)--++(0,-1) coordinate(two_c) ++(0,1) --++(1,0) coordinate(two_6) --++(0,1) coordinate(two_d) ++(0,-2)  coordinate(two_e)--++(0, 1) --++(1,0) coordinate(two_7) --++(1,1) coordinate(two_f) ++(-1,-1) --+(1,-1) coordinate(two_g);
\begin{scope}
\path [fill] (two_4) circle(0.1) node[above] {$4$};
\path [fill] (two_5) circle(0.1) node[above] {$5$};
\path [fill] (two_6) circle(0.1) +(-0.2,0) node[above] {$6$};
\path [fill] (two_7) circle(0.1) node[above] {$7$};
\end{scope}
\path [fill] (two_a) node[left] {$a$} ++(-0.075,-0.075)  rectangle +(0.15,0.15);
\path [fill] (two_b) node[left] {$b$} ++(-0.075,-0.075)  rectangle +(0.15,0.15);
\path [fill] (two_c) node[left] {$c$} ++(-0.075,-0.075)  rectangle +(0.15,0.15);
\path [fill] (two_d) node[right] {$d$} ++(-0.075,-0.075)  rectangle +(0.15,0.15);
\path [fill] (two_e) node[left] {$e$} ++(-0.075,-0.075)  rectangle +(0.15,0.15);
\path [fill] (two_f) node[left] {$f$} ++(-0.075,-0.075)  rectangle +(0.15,0.15);
\path [fill] (two_g) node[left] {$g$} ++(-0.075,-0.075)  rectangle +(0.15,0.15);
\end{scope}
\path (two_c) ++(0.5, -0.75) coordinate(ii)node{$(ii)$};
\path (one_c) ++(0,-2.5) coordinate(dis_1);

\begin{scope}[red]
\draw (dis_1) --++(-1, -1) coordinate(dis_a)  ++(1, 0) coordinate(dis_b) --++(0,1)--++(1,-1) coordinate(dis_c) ++(-1,1) --++(2.5,0) coordinate(dis_2) --++(0,-1) ++(0,1) --++(1.5,0) coordinate(dis_3)--++(-0.75,-1)coordinate(dis_d) ++(1.5,0) coordinate(dis_e) -- (dis_3);
\begin{scope}
\path [fill] (dis_1) circle(0.1) node[above] {$1$};
\path [fill] (dis_2) circle(0.1) node[above] {$2$};
\path [fill] (dis_3) circle(0.1) node[above] {$3$};
\end{scope}

\end{scope}

\begin{scope}[blue]
\draw (dis_a) --++(0.5,-1) coordinate(dis_4)--++(0.5,1) ++(-0.5,-1) --++(1.5,0) coordinate(dis_5)--++(0,1) ++(0,-1)--++(3,-1) coordinate(dis_6) --++(-2,1) coordinate(dis_7) --++(-0.5,1)coordinate(dis_g) ++(1,0) coordinate(dis_f) --++(-0.5,-1) (dis_6)--(dis_d) (dis_6)--(dis_e);

\begin{scope}
\path [fill] (dis_4) circle(0.1) node[below] {$4$};
\path [fill] (dis_5) circle(0.1) node[below] {$5$};
\path [fill] (dis_6) circle(0.1) node[right] {$6$};
\path [fill] (dis_7) circle(0.1) node[right] {$7$};
\end{scope}

\end{scope}

\begin{scope}[black]
\path [fill] (dis_a) node[left] {$a$} ++(-0.075,-0.075)  rectangle +(0.15,0.15);
\path [fill] (dis_b) node[left] {$b$} ++(-0.075,-0.075)  rectangle +(0.15,0.15);
\path [fill] (dis_c) node[left] {$c$} ++(-0.075,-0.075)  rectangle +(0.15,0.15);
\path [fill] (dis_d) node[right] {$d$} ++(-0.075,-0.075)  rectangle +(0.15,0.15);
\path [fill] (dis_e) node[left] {$e$} ++(-0.075,-0.075)  rectangle +(0.15,0.15);
\path [fill] (dis_f) node[left] {$f$} ++(-0.075,-0.075)  rectangle +(0.15,0.15);
\path [fill] (dis_g) node[above] {$g$} ++(-0.075,-0.075)  rectangle +(0.15,0.15);
\end{scope}

\path (i) ++(0, -5.75) node{$(iii)$};
\path (dis_3) ++(4,0.5) coordinate(el_12);

\begin{scope}[red]
\draw (el_12)--++(-1, -1) coordinate(el_1a) --++(1,-1) coordinate(el_1b) --++(1, 1) coordinate(el_1c) --(el_12) --(el_1b) (el_1a)--(el_1c);
\draw (el_12)--++(2.5,-1) coordinate(el_2f)--++(0,1) coordinate(el_23) -- (el_12);
\draw (el_23)--++(1,-1) coordinate(el_3d)--++(1,1) coordinate(el_3e)--(el_23);

\begin{scope}
\path [fill] (el_12) circle(0.09) node[above]{$\{1,2\}$};
\path [fill] (el_1a) circle(0.09) node[left]{$\{1,a\}$};
\path [fill] (el_1c) circle(0.09) ++(0,-0.1)node[right, inner sep=1mm]{$\{1,c\}$};
\path [fill] (el_1b) circle(0.09) node[right, inner sep=0.5mm]{$\{1,b\}$};
\path [fill] (el_23) circle(0.09) node[above]{$\{2,3\}$};
\path [fill] (el_2f) circle(0.09) ++(-0.45,0.3)node[above, inner sep=0.5mm]{$\{2,f\}$};
\path [fill] (el_3e) circle(0.09) node[above]{$\{3,e\}$};
\path [fill] (el_3d) circle(0.09) node[right, inner sep=1mm]{$\{3,d\}$};
\end{scope}

\end{scope}

\path (el_1a) ++(0,-1.5) coordinate(el_4a);
\begin{scope}[blue]
\draw (el_4a) --++(1,0)coordinate(el_4b)--++(0,-1) coordinate(el_45) --(el_4a);
\draw (el_45)--++(1,1)coordinate(el_5c)--++(1,-1) coordinate(el_56) --(el_45);
\draw (el_56)--++(0.5,1)coordinate(el_67)--++(1,0) coordinate(el_6d)--++(1,-1)coordinate(el_6e) -- (el_56) -- (el_6d) (el_67)--(el_6e);
\draw (el_67) --++(0, 0.75)coordinate(el_7f)--++(-0.75,0) coordinate(el_7g)--(el_67);

\begin{scope}
\path [fill] (el_4a) circle(0.09) node[left]{$\{4,a\}$};
\path [fill] (el_4b) circle(0.09) node[right, inner sep=0.5mm]{$\{4,b\}$};
\path [fill] (el_45) circle(0.09) node[below]{$\{4,5\}$};
\path [fill] (el_5c) circle(0.09) ++(0,-0.2)node[below]{$\{5,c\}$};
\path [fill] (el_56) circle(0.09) node[below]{$\{5,6\}$};
\path [fill] (el_67) circle(0.09) node[left]{$\{6,7\}$};
\path [fill] (el_7g) circle(0.09) ++(-0.2,0)node[above, inner sep=1mm]{$\{7,g\}$};
\path [fill] (el_7f) circle(0.09) node[right,inner sep=1mm]{$\{7,f\}$};
\path [fill] (el_6d) circle(0.09) node[right, inner sep=1mm]{$\{6,d\}$};
\path [fill] (el_6e) circle(0.09) node[below]{$\{6,e\}$};
\end{scope}
\end{scope}

\begin{scope}[black]
\draw (el_1a)--(el_4a);
\draw (el_1b)--(el_4b);
\draw (el_1c)--(el_5c);
\draw (el_3d)--(el_6d);
\draw (el_3e)--(el_6e);
\draw (el_2f)--(el_7f);
\end{scope}

\path (ii) ++(0,-5.75) node{$(iv)$};

\end{tikzpicture}
 \end{center}
\caption{$(i)$ First input tree. $(ii)$ Second input tree, which is compatible with the first. $(iii)$ Display graph of the input trees. $(iv)$ Edge label intersection graph of the input trees.}
 \label{fig:example}
\end{figure}

\section{Characterization using minimal cuts}
Let $\Prof = \{T_1, T_2, \cdots, T_k\}$ be a profile of phylogenetic trees. A minimal separator $F$ of $L(\Prof)$ is \emph{legal} if for every $F' \subseteq F$ such that $F' \subseteq V(L(T))$ for some input tree $T$, $F'$ is a clique in $L(T)$. 

\begin{theorem}\label{thm:edge_label}
Profile $\Prof$ is compatible if and only if there exists a maximal set $\F$ of pairwise parallel minimal separators in $L(\Prof)$ where every separator in $\F$ is legal.
\end{theorem}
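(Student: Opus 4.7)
The plan is to derive Theorem~\ref{thm:edge_label} by combining Theorem~\ref{thm:EL_compatibility} with Theorem~\ref{thm:parallel_minimal_ct}, using a dictionary between \emph{restricted} triangulations of $L(\Prof)$ and maximal pairwise parallel families of \emph{legal} minimal separators. By Theorem~\ref{thm:EL_compatibility}, $\Prof$ is compatible iff $L(\Prof)$ has a restricted triangulation. Restrictedness is inherited by sub-triangulations (the fill-in edges of a thinner triangulation form a subset of those of a thicker one, so none can become invalid), so compatibility is equivalent to the existence of a \emph{minimal} restricted triangulation $H$. Now Theorem~\ref{thm:parallel_minimal_ct}(3) attaches to each such $H$ the maximal pairwise parallel family $\triangle_H$ with $H = L(\Prof)_{\triangle_H}$, and Theorem~\ref{thm:parallel_minimal_ct}(1) provides the reverse assignment $\F \mapsto L(\Prof)_{\F}$. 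It therefore suffices to show: $H$ is restricted iff every $F \in \triangle_H$ is legal, and, symmetrically, $L(\Prof)_{\F}$ is restricted iff every $F \in \F$ is legal.

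The core structural observation I will use is that the internal vertices of distinct input trees are kept distinct in $G(\Prof)$, so two edges of a single input tree $T$ share an endpoint in $G(\Prof)$ if and only if they share an endpoint in $T$. Equivalently, $L(T)$ is the subgraph of $L(\Prof)$ induced on $V(L(T))$. This is the bridge between "clique in $L(T)$" (appearing in the definition of legal) and "already an edge of $L(\Prof)$" (what prevents a fill-in edge from being added).

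Assuming every $F \in \F$ is legal, any fill-in edge $\{u,v\}$ of $L(\Prof)_{\F}$ must arise from completing some $F \in \F$ into a clique, so $\{u,v\} \subseteq F$; if additionally $u,v \in V(L(T))$ for some input $T$, legality forces $\{u,v\}$ to be a clique in $L(T)$, hence already an edge of $L(\Prof)$, contradicting its fill-in status. So $L(\Prof)_{\F}$ is restricted. Conversely, if $H$ is a restricted minimal triangulation and $F \in \triangle_H$, then $F$ is a clique in $H$; for any $F' \subseteq F$ with $F' \subseteq V(L(T))$, every pair in $F'$ is an edge of $H$ that is either in $L(\Prof)$ or a fill-in edge. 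Restrictedness rules out the latter, so all pairs lie in $L(\Prof)$ and therefore, by the induced-subgraph observation, in $L(T)$. Hence $F'$ is a clique in $L(T)$ and $F$ is legal. Combining the two equivalences with Theorems~\ref{thm:EL_compatibility} and~\ref{thm:parallel_minimal_ct} completes the proof.

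The main (minor) obstacle I anticipate is being explicit about the induced-subgraph observation, since it rests on the convention for building $G(\Prof)$ (internal vertices of different trees are disjoint, labels at the leaves are identified); once this is stated cleanly the rest is a straightforward chase through the definitions of "legal", "restricted", and "fill-in edge". The other thing to state carefully is the passage from an arbitrary restricted triangulation to a minimal one, which is just: thin the triangulation by removing fill-in edges one at a time while the result remains chordal; the restricted property is preserved along the way.
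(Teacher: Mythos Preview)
Your proposal is correct and follows essentially the same approach as the paper: reduce to Theorem~\ref{thm:EL_compatibility}, pass to a minimal restricted triangulation by deleting fill-in edges, and use Theorem~\ref{thm:parallel_minimal_ct} to translate between minimal triangulations and maximal pairwise parallel families, checking that ``restricted'' on the triangulation side matches ``every separator is legal'' on the separator side. The paper's proof is the same argument, just slightly terser about the induced-subgraph fact you flag (that $L(T)$ sits inside $L(\Prof)$ as an induced subgraph because internal vertices of different trees are distinct in $G(\Prof)$).
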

\begin{proof}
Assume that $\Prof$ is compatible. From Theorem~\ref{thm:EL_compatibility}, there exists a restricted triangulation $H$ of $L(\Prof)$. We can assume that $H$ is minimal triangulation, since if it is not, a restricted minimal triangulation of $L(\Prof)$ can be obtained by repeatedly deleting fill-in edges from $H$ until it is a minimal triangulation. Let $\F = \triangle_{H}$. From Theorem~\ref{thm:parallel_minimal_ct}, $\F$ is a maximal set of pairwise parallel minimal separators of $L(\Prof)$ and $L(\Prof)_{\F} = H$. Assume that $\F$ contains a separator $F$ that is not legal. Let $\{e, e'\} \subseteq F$ where $\{e, e'\} \subseteq E(T)$ for some input tree $T$ and $e \cap e' = \emptyset$. Vertices of $F$ form a clique in $H$. Thus, $H$ contains  the edge $\{e, e'\}$. Since $\{e, e'\}$ is not a valid edge, $H$ is not a restricted triangulation, which is a contradiction. Hence, every separator in $\F$ is legal.

Let $\F$ be a maximal set of pairwise parallel minimal separators of $L(\Prof)$ where every separator in $\F$ is legal. From Theorem~\ref{thm:parallel_minimal_ct}, $L(\Prof)_{\F}$ is a minimal triangulation of $L(\F)$. If $\{e, e'\} \in E(L(\Prof)_{\F})$ is a fill-in edge, then $e \cap e' = \emptyset$ and there exists a minimal separator $F \in \F$ where $\{e, e'\} \subseteq F$. Since separator $F$ is legal, if $\{e, e'\} \subseteq E(T)$ for some input tree $T$ then $e \cap e' \neq \emptyset$. Thus, both $e$ and $e'$ are not from $L(T)$ for any input tree $T$. Hence, every fill-in edge in $L(\Prof)_{\F}$ is valid, and $L(\Prof)_{\F}$ is a restricted triangulation.
\end{proof}

For any vertex  $u$ of an input tree, $\hat{K}(u)$ represents the set of all vertices of $L(\Prof)$ where for every $e \in \hat{K}(u)$, $u \in e$. 

\begin{lemma}\label{lm:no_loner}
Let $F$ be any minimal separator of $L(\Prof)$ and $u$ be any vertex of any input tree. Then, $\hat{K}(u) \not \subseteq F$.
\end{lemma}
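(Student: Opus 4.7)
My plan is a straightforward contradiction argument based on the classical full-component characterization of minimal separators: a set $F$ is an inclusion-minimal $a$-$b$ separator of a graph $G$ precisely when every vertex of $F$ has a neighbor in the component $C_a$ of $G-F$ containing $a$ and in the component $C_b$ containing $b$. I would invoke this as a standard fact (only the easy direction is needed here: if some $f \in F$ missed one side, one could delete $f$ from $F$ and still separate $a$ from $b$).

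The structural ingredient is the line-graph description of $L(\Prof)$: for any edge $e = \{u, v\}$ of an input tree, viewed as a vertex of $L(\Prof)$, the neighbors of $e$ in $L(\Prof)$ are precisely $(\hat{K}(u) \cup \hat{K}(v)) \setminus \{e\}$, i.e., the edges sharing endpoint $u$ or $v$. Since $G(\Prof)$ is simple, $\hat{K}(u) \cap \hat{K}(v) = \{e\}$, so the two pieces of this neighborhood overlap only at $e$.

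Now suppose, toward a contradiction, that $\hat{K}(u) \subseteq F$ where $F$ is a minimal $a$-$b$ separator of $L(\Prof)$. Pick any $e \in \hat{K}(u)$, and write $e = \{u, v\}$. By the full-component property, $e$ has a neighbor $e' \in C_a$ and a neighbor $e'' \in C_b$. Neither can lie in $\hat{K}(u)$, since $\hat{K}(u) \subseteq F$ and both $e', e''$ lie outside $F$; therefore both lie in $\hat{K}(v) \setminus \{e\}$. But $e'$ and $e''$ share the endpoint $v$, so they are adjacent in $L(\Prof)$, and since neither is in $F$ the edge $\{e', e''\}$ survives in $L(\Prof)-F$. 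This places $e'$ and $e''$ in the same connected component of $L(\Prof) - F$, contradicting $e' \in C_a$ and $e'' \in C_b$.

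I do not expect a real obstacle; the only mild care needed is to match the full-component characterization to the paper's precise definition of minimal separator (inclusion-minimal $a$-$b$ separator for some nonadjacent pair $a, b$), which is routine. Morally, the argument is just the observation that the star $\hat{K}(u)$ is a clique in $L(\Prof)$ whose removal would strand every would-be neighbor on at most one side.
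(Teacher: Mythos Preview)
Your argument is correct and essentially identical to the paper's: both exploit that any neighbor of $e=\{u,v\}$ outside $F$ must lie in $\hat K(v)\setminus\{e\}$, so the two ``witnesses'' on opposite sides of the separator are adjacent through $v$. The only cosmetic difference is that you invoke the full-component characterization of minimal separators as a named fact, whereas the paper inlines its proof by taking a path $\pi$ from $a$ to $b$ meeting $F$ only at $e$ and then shortcutting $e_1,e,e_2$ to $e_1,e_2$.
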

\begin{proof}
Suppose $F$ is a minimal $a$-$b$ separator of $L(\Prof)$ and $u$ is a vertex of some input tree such that $\hat{K}(u) \subseteq F$. Consider any vertex $e \in \hat{K}(u)$. Then, there exists a path $\pi$ from $a$ to $b$ in $L(\Prof)$ where $e$ is the only vertex of $F$ in $\pi$. If such a path $\pi$ did not exist, then $F-e$ would still be a $a$-$b$ separator, and $F$ would not be minimal, a contradiction. Let $e_1$ and $e_2$ be the neighbors of $e$ in $\pi$ and let $e = \{u, v\}$. Since $\hat{K}(u) \subseteq F$, $\pi$ does not contain any other vertex $e'$ where $u \in e'$.  Thus, $e \cap e_1 = v$ and $e \cap e_2 =v$. Let $\pi = a, \cdots, e_1, e, e_2, \cdots, b$. Then $\pi' = a, \cdots, e_1, e_2, \cdots, b$ is also a path from $a$ to $b$. Since $F$ is a separator, there should exist a vertex in path $\pi'$ which is also in $F$. Thus, there exists two vertices of $F$ in path $\pi$, contradicting the assumption that $\pi$ contains exactly one vertex of $F$. Thus, neither such a minimal separator $F$ nor such a vertex $u$ can exist.
\end{proof}

A cut $F$ of display graph $G(\Prof)$ is \emph{legal}, if it satisfies the following:
\begin{enumerate}[(LC1)]
\item For every tree $T \in \Prof$, the edges of $T$ in $F$ are incident on a common vertex. 
\item There is at least one edge in each of the connected components of $G(\Prof)-F$.
\end{enumerate}

We make use of the following simple observation in some of the proofs in this section.

\begin{observation}\label{obs:path}
Let $I$ be a set of edges of $G(\Prof)$. Then, there exists a path $v_1, v_2, \cdots, v_m$ in $G(\Prof)-I$ where $m \geq 2$,  if and only if there exists a path $\{v_1, v_2\}, \cdots, \{v_{m-1}, v_m\}$ in $L(\Prof)-I$.
\end{observation}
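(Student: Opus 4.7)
The plan is to exploit the canonical correspondence between edges of $G(\Prof)$ and vertices of $L(\Prof)$: since $V(L(\Prof)) = E(G(\Prof))$, the same set $I$ can be viewed as an edge set on the graph side and as a vertex set on the line graph side, and deletion on each side removes the same underlying objects. Both directions then reduce to a direct translation of the path between these two representations.

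For the forward direction I will start from a path $v_1, v_2, \ldots, v_m$ in $G(\Prof) - I$ and form the sequence $\{v_1,v_2\}, \{v_2,v_3\}, \ldots, \{v_{m-1},v_m\}$. Each entry is, by hypothesis, an edge of $G(\Prof)$ not in $I$, hence a vertex of $L(\Prof) - I$; the distinctness of the $v_i$ forces distinctness of these edges; and consecutive edges $\{v_i,v_{i+1}\}$ and $\{v_{i+1},v_{i+2}\}$ share the endpoint $v_{i+1}$, which is exactly the adjacency condition in the line graph. So the sequence is a path in $L(\Prof) - I$.

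For the reverse direction I will start from a path $\{v_1,v_2\}, \ldots, \{v_{m-1},v_m\}$ in $L(\Prof) - I$. Each entry is a vertex of $L(\Prof) - I$, hence an edge of $G(\Prof)$ not in $I$, and the chosen labeling makes $v_{i+1}$ the common endpoint of consecutive edges. Thus the vertex sequence $v_1, v_2, \ldots, v_m$ gives a walk in $G(\Prof) - I$. Should a vertex repeat, the standard shortcutting argument extracts a simple path with the same endpoints, so a path of the required form exists in $G(\Prof) - I$.

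I do not anticipate any genuine obstacle: the observation is just the familiar correspondence between paths in a graph and paths in its line graph, restricted by the edge set $I$. The only point worth flagging is the walk-versus-path cleanup in the reverse direction, which is entirely routine.
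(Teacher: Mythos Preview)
Your argument is correct; the paper states this as an observation without proof, treating it as the standard correspondence between paths in a graph and paths in its line graph. Your explicit verification, including the walk-to-path shortcutting in the reverse direction, fills in exactly the routine details the paper leaves implicit.
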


\begin{theorem}\label{thm:cuts_seps}
A set $F$ is a legal minimal separator of $L(\Prof)$ if and only if $F$ is a legal minimal cut of $G(\Prof)$.
\end{theorem}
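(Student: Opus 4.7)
The bridge between the two settings is Observation~\ref{obs:path}: vertices of $L(\Prof)$ are precisely edges of $G(\Prof)$, and paths transfer back and forth. The legality conditions match trivially, since in any tree $T$ a clique in $L(T)$ is exactly a set of edges sharing a common endpoint; so the clique condition in the definition of legal separator corresponds exactly to LC1. The real content is to match ``minimal $a$-$b$ separator of $L(\Prof)$ for some non-adjacent $a,b$'' with ``minimal cut of $G(\Prof)$ satisfying LC2.''

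For the forward direction, assume $F$ is a minimal $a$-$b$ separator of $L(\Prof)$. First, $F$ is a cut: if $G(\Prof) - F$ were connected, a path joining an endpoint of $a$ to an endpoint of $b$ would lift via Observation~\ref{obs:path} and be extended through $a$ and $b$ themselves to an $a$-$b$ path in $L(\Prof) - F$, contradicting separation. Let $C_a$ and $C_b$ be the components of $G(\Prof) - F$ containing the endpoints of $a$ and $b$. The subset $F'' \subseteq F$ consisting of the edges with exactly one endpoint in $C_a$ is itself an $a$-$b$ separator, so minimality of $F$ forces $F'' = F$; a symmetric argument for $C_b$ shows that every edge of $F$ runs between $C_a$ and $C_b$. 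If $G(\Prof) - F$ had a third component $C$, then no edge of $F$ would meet $C$, contradicting connectedness of $G(\Prof)$. Hence $V(G(\Prof)) = C_a \cup C_b$; then for any proper $F' \subsetneq F$, an edge of $F \setminus F'$ joins $C_a$ and $C_b$ in $G(\Prof) - F'$, giving a connected graph, so $F$ is a minimal cut. LC2 holds since $a$ and $b$ are edges in $C_a$ and $C_b$ respectively.

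For the converse, given a legal minimal cut $F$ with components $A$ and $B$, LC2 supplies edges $a \in E(A)$ and $b \in E(B)$; their endpoints lie in different components, so $a$ and $b$ are non-adjacent in $L(\Prof)$. Separation is again immediate from Observation~\ref{obs:path}. For separator-minimality, fix $f = \{p, q\} \in F$; minimality of $F$ as a cut forces every edge of $F$ to run between $A$ and $B$ (an internal edge of $A$ or $B$ that sits in $F$ could be removed without reconnecting the cut), so we may take $p \in V(A)$ and $q \in V(B)$. Because $F$ is a minimal cut, $G(\Prof) - (F \setminus \{f\})$ is connected, so it contains a path from an endpoint of $a$ through $V(A)$ to $p$, across $f$, and through $V(B)$ to an endpoint of $b$; lifting via Observation~\ref{obs:path} and prepending $a$ and appending $b$ yields the required $a$-$b$ path in $L(\Prof) - (F \setminus \{f\})$. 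The main obstacle I anticipate is the ``exactly two components'' bookkeeping in the forward direction, where the assumption that $G(\Prof)$ is connected is essential for ruling out a third component disjoint from $F$.
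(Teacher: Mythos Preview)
Your argument is correct and, unlike the paper, you write out both directions explicitly (the paper proves only the forward implication and declares the converse ``similar''). The main difference lies in how the forward direction is organized. The paper proceeds in three decoupled steps: it first shows $F$ is a cut via Observation~\ref{obs:path}; it then obtains LC2 by appealing to Lemma~\ref{lm:no_loner} (a component with no edge would be a single vertex $u$ with $\hat K(u)\subseteq F$, which that lemma rules out); and finally it proves minimality of the cut by assuming some $F'\subsetneq F$ still disconnects $G(\Prof)$ and deriving a contradiction through Observation~\ref{obs:path}. You instead establish one structural fact---that $G(\Prof)-F$ has exactly the two components $C_a,C_b$ and every edge of $F$ runs between them---by applying separator-minimality of $F$ twice, once for each side. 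This single step simultaneously delivers LC2 (each side contains the edge $a$ or $b$) and minimality of the cut (reinserting any edge of $F$ reconnects $C_a$ to $C_b$), and it bypasses Lemma~\ref{lm:no_loner} altogether. The paper's route is more modular and reuses an independently stated lemma; yours is self-contained and makes the two-sided structure of the cut explicit, which is arguably the real reason the correspondence works.
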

\begin{proof}
We will prove that if $F$ is a legal minimal separator of $L(\Prof)$ then $F$ is a legal minimal cut of $G(\Prof)$. The proof for the other direction is similar and is omitted. 

First, we show that $F$ is a cut of $G(\Prof)$. Assume the contrary. Let $\{u, v\}$ and $\{p, q\}$ be vertices in different components of $L(\Prof)-F$. Since, $G(\Prof)-F$ is connected, there exists a path between vertices $u$ and $q$. Also, $\{u,v\} \notin F$ and $\{p,q\} \notin F$. Thus, by Observation~\ref{obs:path} there also exists a path between vertices $\{u, v\}$ and $\{p, q\}$ of $L(\Prof)-F$ and are in the same connected component of $L(\Prof)-F$ which is a contradiction. Thus $F$ is a cut of $G(\Prof)$.
 
Next we show that separator $F$ of $L(\Prof)$ is a legal cut of $G(\Prof)$. For every $T \in \Prof$ all the vertices of $L(T)$ in $F$ form a clique in $L(T)$. Thus, all the edges of $T$ in $F$ are incident on a common vertex. Assume that $G(\Prof)-F$ has a connected component with no edge and let $u$ be the vertex in one such component. Then, $\hat{K}(u) \subseteq F$. But, $F$ is a minimal separator of $L(\Prof)$ and by Lemma~\ref{lm:no_loner}, $\hat{K}(u) \not \subseteq F$ which is a contradiction. Thus, $F$ is a legal cut of $G(\Prof)$.

Lastly, we show that $F$ is a minimal cut of $G(\Prof)$. Assume the contrary. Then, there exists $F' \subset F$ where $G(\Prof)-F'$ is disconnected but $L(\Prof)-F'$ is connected. Since $F' \subset F$ and every connected component of $G(\Prof)-F$ has at least one edge, every connected component  of $G(\Prof)-F'$ also has at least one edge. Let $\{u,v\}$ and $\{p, q\}$ be the edges in different components of $G(\Prof)-F'$. Since, $L(\Prof)-F'$ is connected, there exists an path between $\{u,v\}$ and $\{p, q\}$ in $L(\Prof)-F'$. Then, by Observation~\ref{obs:path} there also exists a path between vertices $u$ and $p$ in $G(\Prof)-F'$. Hence, edges $\{u, v\}$ and $\{p, q\}$ are in the same connected component of $G-F'$ which is contradiction. Thus, $F$ is also minimal cut of $G(\Prof)$.
\end{proof}

\begin{lemma}\label{lm:parallel}
Two legal minimal separators $F$ and $F'$ of $L(\Prof)$ are parallel if and only the if legal minimal cuts $F$ and $F'$ are parallel in $G(\Prof)$.
\end{lemma}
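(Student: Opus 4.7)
The plan is to exploit Observation~\ref{obs:path} to set up a natural bijection between the edge-supported connected components of $G(\Prof) - F$ and the connected components of $L(\Prof) - F$, under which the condition ``intersects $F'$'' on one side matches the same condition on the other. Throughout I keep in mind that $F$ (respectively $F'$) is simultaneously a set of vertices of $L(\Prof)$ and a set of edges of $G(\Prof)$, and likewise for $F'$.

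The first step is a structural claim: for any legal minimal separator/cut $F$, the connected components of $L(\Prof) - F$ are exactly the subgraphs induced by the edge sets of the connected components of $G(\Prof) - F$. More precisely, given a component $C$ of $G(\Prof) - F$, the set $E(C)$ (thought of as vertices in $L(\Prof)$) is contained in one component of $L(\Prof) - F$ (apply Observation~\ref{obs:path} to any two edges of $C$, using a path in $C$ between their endpoints); conversely, if $\{u,v\}$ and $\{p,q\}$ lie in the same component of $L(\Prof) - F$, Observation~\ref{obs:path} gives a path in $G(\Prof) - F$ between their endpoints, so they come from the same component of $G(\Prof) - F$. Property (LC2) guarantees that every component of $G(\Prof) - F$ contains at least one edge, so the correspondence $C \mapsto E(C)$ is a bijection between components of $G(\Prof) - F$ and components of $L(\Prof) - F$.

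With this bijection in hand, the equivalence of the two parallelism conditions becomes essentially a tautology. For a component $C$ of $G(\Prof)-F$ with corresponding component $H$ of $L(\Prof)-F$ (so $V(H) = E(C)$), the statement $V(H) \cap F' \neq \emptyset$ (where $F'$ is regarded as a set of vertices of $L(\Prof)$) says exactly that some edge of $G(\Prof)$ lying in $F'$ belongs to $E(C)$, i.e.\ that $E(C) \cap F' \neq \emptyset$. Hence the number of components of $L(\Prof)-F$ meeting $F'$ in a vertex equals the number of components of $G(\Prof)-F$ meeting $F'$ in an edge, so ``at most one'' on one side is equivalent to ``at most one'' on the other. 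Applying the same argument with the roles of $F$ and $F'$ interchanged is not needed since the definition of parallel is symmetric in this counting condition for minimal separators (and for minimal cuts), but it could also be invoked for redundancy.

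The only delicate point, and the main thing to be careful about, is the notational overlap: $F$ and $F'$ live in two graphs with different ambient vertex sets, and the statement mixes ``vertex of $L(\Prof)$'' with ``edge of $G(\Prof)$''. Once the identification is made explicit and (LC2) is used to rule out isolated vertices in components of $G(\Prof)-F$, no further work is required beyond two direct applications of Observation~\ref{obs:path}.
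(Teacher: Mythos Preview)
Your proof is correct and follows essentially the same approach as the paper: both arguments rest on Observation~\ref{obs:path} together with property (LC2) to match components of $G(\Prof)-F$ with components of $L(\Prof)-F$. The paper phrases it as a direct contradiction (pick two edges of $F'$ in different components of $G(\Prof)-F$, then use Observation~\ref{obs:path} to produce a path contradicting the assumed parallelism in $L(\Prof)$), whereas you first set up the explicit bijection $C\mapsto E(C)$ and read off the equivalence of the two ``at most one component meets $F'$'' conditions; these are the same argument presented in slightly different packaging.
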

\begin{proof}
Assume that legal minimal separators $F$ and $F'$ of $L(\Prof)$ are parallel, but legal minimal cuts $F$ and $F'$ of $G(\Prof)$ are not. Then, there exist $\{\{u, v\}, \{p, q\}\} \subseteq F'$ where $\{u, v\}$ and $\{p, q\}$ are present in different components of $G(\Prof)-F$. Since $F$ and $F'$ are parallel separators in $L(\Prof)$, and $F$ does not contain $\{u, v\}$ and $\{p, q\}$, there exists a path between vertices $\{u, v\}$ and $\{p, q\}$ in $L(\Prof)-F$. Then, by Observation~\ref{obs:path} there also exists a path between vertices $u$ and $q$ in $G(\Prof)-F$. Thus, edges $\{u, v\}$ and $\{p, q\}$ are in the same connected component of $G(\Prof)-F$ which is a contradiction.

The other direction can be proved similarly using Observation~\ref{obs:path}.
\end{proof}

The next lemma follows from the definition of restricted triangulation and is from~\cite{Gysel2012}.

\begin{lemma}\label{lm:two_cliques}
Let $H$ be a restricted triangulation of $L(\Prof)$ and let $(T, B)$ be a clique tree of $H$. Then, for every vertex $e = \{u, v\}$ in $L(\Prof)$, there does not exist a node $x \in V(T)$ where $B(x)$ contains vertices from both $\hat{K}(u)\setminus e$ and $\hat{K}(v)\setminus e$.
\end{lemma}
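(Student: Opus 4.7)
The plan is to argue by contradiction. Suppose that some bag $B(x)$ contains $e_1 = \{u,a\}\in\hat{K}(u)\setminus\{e\}$ together with $e_2 = \{v,b\}\in\hat{K}(v)\setminus\{e\}$, where necessarily $a\neq v$ and $b\neq u$ (otherwise $e_1$ or $e_2$ would coincide with $e$). The first step is to upgrade $x$ to a node whose bag contains all three of $e,e_1,e_2$. Since $e$ is adjacent in $L(\Prof)\subseteq H$ to both $e_1$ and $e_2$ (sharing endpoints $u$ and $v$ respectively), and $\{e_1,e_2\}\in E(H)$ because $e_1,e_2\in B(x)$, the three subtrees of the clique tree associated with $e,e_1,e_2$ under the coherence property pairwise intersect. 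The Helly property for subtrees of a tree then yields a node $y$ with $\{e,e_1,e_2\}\subseteq B(y)$.

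Next I would use restrictedness to control how $B(y)$ meets each $V(L(T'))$. For any two vertices $p,q\in B(y)\cap V(L(T'))$, the edge between them in $H$ cannot be a fill-in, since both endpoints would then lie in $L(T')$ and violate the validity condition; hence $p$ and $q$ are adjacent in $L(\Prof)$ and therefore in $L(T')$. So $B(y)\cap V(L(T'))$ is a clique of $L(T')$. Since $L(T')$ is the line graph of a tree, and edges of a tree that pairwise share endpoints must share a single common endpoint (otherwise $T'$ would contain a cycle), every such clique is a subset of the star at some $w_{T'}\in V(T')$.

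Applied to the input tree $T$ that contains $e$, this gives $w_T\in\{u,v\}$; without loss of generality take $w_T=u$. Then $e_2=\{v,b\}$ cannot lie in $E(T)$, since otherwise it would have to be incident to $u$, forcing $u\in\{v,b\}$ and contradicting $u\neq v$ and $u\neq b$. Hence $e_2\in E(T^{*})$ for some $T^{*}\neq T$, and because $v\in V(T)\cap V(T^{*})$ the vertex $v$ must be a leaf of $\Prof$. A parallel analysis for $e_1$, together with a case split on whether $\{e_1,e_2\}$ is a genuine edge of $L(\Prof)$ (which would force $a=b$ and hence a triangle $u,v,a$ in $G(\Prof)$, impossible outside degenerate two-leaf input trees) or a true fill-in (which by restrictedness forces $e_1$ and $e_2$ into distinct input trees), is what I would use to extract the final contradiction. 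The hard part will be precisely this closing step: in the subcase in which both $u$ and $v$ are leaves and $e_1,e_2$ live in two distinct input trees both different from $T$, the star-and-clique argument by itself is not obviously enough, and one must push the clique-tree/subtree analysis further — tracking the vertices $w_{T^{*}}$ across the trees containing $e_1$ and $e_2$ — to rule out this last configuration.
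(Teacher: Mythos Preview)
The paper does not actually prove this lemma: it says only that the statement ``follows from the definition of restricted triangulation'' and attributes it to Gysel et al. The argument the authors have in mind is a one-liner, and it relies on the fact --- implicit in how the lemma is used (only for internal edges $e$, in the proof of Lemma~\ref{lm:maximal_seps_equals_maximal_cuts}) --- that both endpoints of $e$ are internal vertices. In that case $u$ and $v$ belong only to the tree $T$ containing $e$, so any $e_1\in\hat K(u)\setminus\{e\}$ and $e_2\in\hat K(v)\setminus\{e\}$ are edges of $T$. Since $a\neq v$, $b\neq u$, and $a=b$ would force a triangle in the tree $T$, we get $e_1\cap e_2=\emptyset$; hence the edge $\{e_1,e_2\}$ present in the clique $B(x)$ is a fill-in edge with both endpoints in $L(T)$, contradicting restrictedness. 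No Helly step, no star analysis, no case split is needed.

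Your proposal is far more elaborate because you attack the statement for arbitrary $e$, and the difficulty you flag in the closing step is real --- in fact it cannot be overcome, because the lemma as literally stated is false for leaf edges. Take two stars on the same leaf set $\{u,c,d\}$ with centres $v$ and $a$, and set $e=\{u,v\}$, $e_1=\{u,a\}$, $e_2=\{c,v\}$. Adding the valid fill-in edges $\{u,a\}\!-\!\{c,v\}$, $\{u,a\}\!-\!\{d,v\}$, $\{c,v\}\!-\!\{d,a\}$ to $L(\Prof)$ yields a restricted chordal supergraph whose maximal clique $\{\{u,v\},\{c,v\},\{d,v\},\{u,a\}\}$ contains both $e_1$ and $e_2$. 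So the ``hard subcase'' you could not close is precisely where the general statement breaks; restrict to internal $e$ and your machinery becomes unnecessary.
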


A set $\F$ of pairwise parallel legal minimal cuts of $G(\Prof)$ is \emph{complete}, if for every input tree $T \in \Prof$ and for every internal edge $e$ of $T$, there exists a cut $F \in \F$ where $e$ is the only edge of $T$ in $F$. 

\begin{example}
For the display graph $G(\Prof)$ of Fig.~\ref{fig:example}, let $\F = \{F_1,F_2,F_3,F_4\}$, where $F_1 = \{\{1,2\}, \{5,6\}\}$, $F_2 = \{\{2,3\}, \{6,7 \}\{5,6\}\}$, $F_3=\{\{4,5\}, \{1,2\},\{1,c\}\}$ and $F_4=\{\{6,7\},\{2,f\}\}$. Then, $\F$ is a complete set of pairwise parallel legal minimal cuts.
\end{example}

\begin{lemma}\label{lm:maximal_seps_equals_maximal_cuts}
If there exists a maximal set $\F$ of pairwise parallel minimal separators of $L(\Prof)$ where every separator $F \in \F$ is legal, then there exists a complete set of pairwise parallel legal minimal cuts for $G(\Prof)$.
\end{lemma}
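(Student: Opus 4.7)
The plan is to carry $\F$ directly over from $L(\Prof)$ to $G(\Prof)$ via the equivalences already proved, and then to use a clique tree of the restricted minimal triangulation $H = L(\Prof)_{\F}$ to exhibit the "exactly-$e$" cut required by completeness. By Theorem~\ref{thm:cuts_seps} every $F \in \F$ is also a legal minimal cut of $G(\Prof)$ and by Lemma~\ref{lm:parallel} these cuts stay pairwise parallel, so the real work is to show that for every internal edge $e = \{u,v\}$ of every input tree $T \in \Prof$ some $F \in \F$ satisfies $F \cap E(T) = \{e\}$.

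I would then set up the clique-tree machinery. By Theorem~\ref{thm:parallel_minimal_ct}(1), $H = L(\Prof)_{\F}$ is a minimal triangulation of $L(\Prof)$, and the same argument used in the reverse direction of Theorem~\ref{thm:edge_label} shows that $H$ is in fact \emph{restricted}. Fix a clique tree $(T',B)$ of $H$ and an internal edge $e = \{u,v\}$ of some $T$. Since $u$ and $v$ are internal they appear only in $T$, making $\hat{K}(u), \hat{K}(v) \subseteq E(T)$ cliques of $H$; coherence turns $V_u = \{x : \hat{K}(u) \subseteq B(x)\}$ and $V_v = \{x : \hat{K}(v) \subseteq B(x)\}$ into non-empty subtrees of $T'$, and Lemma~\ref{lm:two_cliques} forces $V_u \cap V_v = \emptyset$. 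Both subtrees lie inside $T'_e$, the subtree of bags containing $e$, so the unique $V_u$-to-$V_v$ path $P = z_0,\ldots,z_m$ in $T'$ stays inside $T'_e$.

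The main step is a type classification along $P$. Because $H$ is restricted, no fill-in edge of $H$ lies inside $L(T)$, so $B(z_i) \cap E(T)$ is a clique in $L(T)$; pairwise adjacent edges of a tree share a common vertex, hence $B(z_i) \cap E(T)$ lies in a star $\hat{K}_T(w)$, and because $e \in B(z_i)$ we must have $w \in \{u,v\}$. Call $z_i$ of type $u$ when $B(z_i) \cap E(T) \subseteq \hat{K}(u)$ and of type $v$ when $B(z_i) \cap E(T) \subseteq \hat{K}(v)$; Lemma~\ref{lm:two_cliques} rules out any bag containing edges from both $\hat{K}(u)\setminus\{e\}$ and $\hat{K}(v)\setminus\{e\}$, so each $z_i$ is of at least one of the two types. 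Since $|\hat{K}(u)|, |\hat{K}(v)| \geq 2$, the endpoint $z_0$ is of type $u$ but not of type $v$ and $z_m$ is of type $v$ but not of type $u$, so picking the smallest $i$ for which $z_i$ is not of type $u$ produces a neighboring pair with $z_{i-1}$ of type $u$ and $z_i$ of type $v$, giving $B(z_{i-1}) \cap B(z_i) \cap E(T) \subseteq \hat{K}(u) \cap \hat{K}(v) = \{e\}$, with equality because $e$ lies in every bag on $P$. Theorem~\ref{thm:parallel_minimal_ct}(2) then places $F = B(z_{i-1}) \cap B(z_i)$ in $\F$, witnessing completeness for $e$. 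The hard part will be precisely this trichotomy, which is where the restricted-triangulation property, Lemma~\ref{lm:two_cliques}, and the triangle-free structure of $T$ all come together; once it is in place the "first transition" pair on $P$ delivers the desired $F$ immediately.
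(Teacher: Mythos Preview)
Your proposal is correct and follows essentially the same approach as the paper: carry $\F$ to $G(\Prof)$ via Theorem~\ref{thm:cuts_seps} and Lemma~\ref{lm:parallel}, take a clique tree of the restricted minimal triangulation $L(\Prof)_{\F}$, walk the path between a bag containing $\hat{K}(u)$ and one containing $\hat{K}(v)$, and read off the desired separator at the first ``transition''. The one organizational difference is that you classify each bag along the path by type (using the restricted property to see that $B(z_i)\cap E(T)$ is a star in $T$ centered at $u$ or $v$), whereas the paper only tracks whether $B(x_i)$ meets $\hat{K}(u)\setminus\{e\}$ and then invokes the legality of $F\in\F$ at the end to exclude edges of $T$ disjoint from $e$; these are two packagings of the same observation.
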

\begin{proof}
We will show that for every internal edge $e=\{u,v\}$ of an input tree $T$ there exists a minimal separator in $\F$ which contains only vertex $e$ from $L(T)$. Then from Theorem~\ref{thm:cuts_seps} and Lemma~\ref{lm:parallel} it follows that $\F$ is a complete set of pairwise parallel legal minimal cuts for display graph $G(\Prof)$.

As shown in proof of Theorem~\ref{thm:edge_label}, $L(\Prof)_{\F}$ is a restricted minimal triangulation of $L(\Prof)$. Let $(S, B)$ be a clique tree of $L(\Prof)_{\F}$. By definition, the vertices in each of the sets $\hat{K}(u)$ and $\hat{K}(v)$ form a clique in $L(\Prof)$. Consider any vertex $p$ of $S$ where $\hat{K}(u) \subseteq B(p)$ and any vertex $q$ of $S$ where $\hat{K}(v) \subseteq B(q)$. Since $(S, B)$ is a clique tree of $L(\Prof)_{\F}$, there will always exist such vertices $p$ and $q$. Also, by Lemma~\ref{lm:two_cliques}, $p \neq q$, $B(p) \cap \{\hat{K}(v) \setminus e\} = \emptyset$ and $B(q) \cap \{\hat{K}(u) \setminus e\} = \emptyset$

Let $\pi = p, x_1, x_2, \cdots, x_m, q$ be the path from $p$ to $q$ in $S$ where $m \geq 0$. Let $x_0=p$ and $x_{m+1}=q$. Let $x_i$ be the vertex nearest to $p$ in path $\pi$ where $i \in [m+1]$ and $B(x_i) \cap \{\hat{K}(u) \setminus e\} = \emptyset$. Let $F = B(x_{i-1}) \cap B(x_i)$. Then by Theorem~\ref{thm:parallel_minimal_ct}, $F \in \F$. Since $\hat{K}(u) \cap \hat{K}(v) = e$, by the coherence property of the clique tree, $e \in B(x_j)$ for every $j \in [m]$. Thus, $e \in F$. By Lemma~\ref{lm:two_cliques}, $B(x_{i-1}) \cap \{\hat{K}(v) \setminus e\} = \emptyset$. Since $B(x_i) \cap \{\hat{K}(u) \setminus e\} = \emptyset$, $F \cap \hat{K}(u) = e$ and $F \cap \hat{K}(v) = e$. Thus, for every vertex $e' \in L(T)$ where $e \neq e'$ and $e \cap e' \neq \emptyset$, $e' \notin F$. Also, since every separator in $\F$ is legal, for every vertex $f \in L(T)$ where $f \cap e = \emptyset$, $f \notin F$. Thus, $e$ is the only vertex of $L(T)$ in $F$.
\end{proof}

\begin{lemma}\label{lm:maximal_cuts_equals_maximal_seps}

If there exists a complete set of pairwise parallel legal minimal cuts for $G(\Prof)$, then there exists a maximal set $\F$ of pairwise parallel minimal separators of $L(\Prof)$ where every separator in $\F$ is legal.
\end{lemma}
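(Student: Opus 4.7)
The plan is to push the given complete set of cuts across the correspondence $G(\Prof)\leftrightarrow L(\Prof)$ established in Theorem~\ref{thm:cuts_seps} and Lemma~\ref{lm:parallel}, extend it greedily to a maximal pairwise parallel family of minimal separators of $L(\Prof)$, and then exploit completeness to rule out any non-legal separator arising during the extension.

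Concretely, let $\C$ denote the given complete set of pairwise parallel legal minimal cuts of $G(\Prof)$. By Theorem~\ref{thm:cuts_seps} every $F \in \C$ is simultaneously a legal minimal separator of $L(\Prof)$, and by Lemma~\ref{lm:parallel} these separators remain pairwise parallel when regarded in $L(\Prof)$. I would then extend $\C$ greedily to a maximal pairwise parallel set $\F^*$ of minimal separators of $L(\Prof)$. The whole argument then reduces to the following claim: any minimal separator $S$ of $L(\Prof)$ that is parallel to every $F \in \C$ is legal. Granting this, every separator of $\F^*$ is in particular parallel to every member of $\C$ by pairwise parallelism of $\F^*$, so $\F^*$ consists entirely of legal separators, as desired.

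To prove the claim I would argue by contradiction. If $S$ is not legal, then some input tree $T$ contains vertex-disjoint edges $e_1, e_2 \in S \cap V(L(T))$. Because $e_1$ and $e_2$ are disjoint edges of a tree, every edge lying strictly between them in $T$ has both endpoints of degree at least two and hence is an internal edge of $T$; pick any such edge $e$, noting that removing $e$ from $T$ separates $e_1$ from $e_2$. Completeness of $\C$ yields some $F \in \C$ with $F \cap E(T) = \{e\}$, which in particular puts $e_1, e_2 \notin F$. Minimality of $F$ as a cut of $G(\Prof)$ forces the two endpoints of $e$ into different components of $G(\Prof) - F$ (otherwise $F \setminus \{e\}$ would already be a cut), and since $F$ deletes no edge of $T$ other than $e$, the two subtrees of $T - e$ each remain connected in $G(\Prof) - F$, so $e_1$ and $e_2$ end up in different components of $G(\Prof) - F$. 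Applying Observation~\ref{obs:path} then places $e_1$ and $e_2$ in different components of $L(\Prof) - F$, contradicting the parallelism of $S$ and $F$. The step I expect to take the most care is this final translation from components of $G(\Prof) - F$ to those of $L(\Prof) - F$ via Observation~\ref{obs:path}, together with verifying that the separating edge $e$ can always be chosen internal regardless of whether $e_1$ or $e_2$ happens to be a non-internal edge of $T$.
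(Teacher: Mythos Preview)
Your proposal is correct and follows essentially the same route as the paper: transport $\C$ to $L(\Prof)$ via Theorem~\ref{thm:cuts_seps} and Lemma~\ref{lm:parallel}, extend to a maximal pairwise parallel family, and derive a contradiction from a hypothetical non-legal member by invoking completeness to find a cut $F\in\C$ isolating an internal edge of $T$ that separates the offending pair $e_1,e_2$. The only cosmetic difference is the direction in which Observation~\ref{obs:path} is applied: you argue that $e_1,e_2$ lie in distinct components of $G(\Prof)-F$ and then push this to $L(\Prof)-F$, whereas the paper starts from parallelism in $L(\Prof)$ and pushes back to $G(\Prof)$; your version also avoids the paper's unused side remark that the non-legal separator is itself a minimal cut of $G(\Prof)$.
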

\begin{proof}
Consider any complete set of pairwise parallel legal minimal cuts $\F'$ of $G(\Prof)$. By Theorem~\ref{thm:cuts_seps} and Lemma~\ref{lm:parallel}, $\F'$ is a set of pairwise parallel legal minimal separators of $L(\Prof)$. There exists a maximal set $\F$ of pairwise parallel minimal separators where $\F' \subseteq \F$. Assume that there exists a minimal separator $F$ in $\F \setminus \F'$ which is not legal. Using an argument similar to the proof of Theorem~\ref{thm:cuts_seps}, it can be shown that $F$ is a cut of $G(\Prof)$, every connected component of $G(\Prof)-F$ has at least one edge, and $F$ is also a minimal cut of $G$.

Since, by assumption, minimal separator $F$ of $L(\Prof)$ is not legal, there exists a tree $T \in \Prof$ where at least two nonincident edges of $T$ are in $F$. Let $e_1=\{x, y\}$ and $e_2=\{x', y'\}$ be those nonincident edges. Consider the any internal edge $e_3$ in $T$ where $e_1$ and $e_2$ are in different components of $T-e_3$. Such an edge exists because $e_1$ and $e_2$ are nonincident. Set $\F'$ is a complete set of pairwise parallel legal minimal cuts of $G$. Thus, there exists minimal cut $F' \in \F'$ where $e_3$ is the only edge of $T$ in $F'$. Since, minimal separators $F$ and $F'$ are in $\F$, they are parallel to each other and vertices $e_1$ and $e_2$ are in the same connected component of in $L(\Prof)-F'$. Thus, by Observation~\ref{obs:path}, there exists a path between vertices $x$ and $x'$ in $G(\Prof)-F'$ and edges $e_1$ and $e_2$ are also in the same connected component of $G(\Prof)-F'$. But that is impossible, since, $F'$ is a legal minimal cut of $G$ and $e_1$ and $e_2$ are in different components of $T-e_3$.

Thus, every separator of $\F \setminus \F'$ is legal and the set $\F$ is a maximal set of pairwise minimal separators of $L(\Prof)$ where every separator in $\F$ is legal.
\end{proof}

Lemmas~\ref{lm:maximal_seps_equals_maximal_cuts} and~\ref{lm:maximal_cuts_equals_maximal_seps} imply the following theorem.

\begin{theorem}\label{thm:maximal_seps_equals_maximal_cuts}
 There exists a maximal set $\F$ of pairwise parallel minimal separators of $L(\Prof)$ where every separator in $\F$ is legal if and only if there exists a complete set of pairwise parallel minimal cuts for $G(\Prof)$.
\end{theorem}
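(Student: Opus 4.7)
The plan is to observe that Theorem~\ref{thm:maximal_seps_equals_maximal_cuts} is just the packaging of Lemmas~\ref{lm:maximal_seps_equals_maximal_cuts} and~\ref{lm:maximal_cuts_equals_maximal_seps} into a single biconditional, so no new combinatorial argument is required. Lemma~\ref{lm:maximal_seps_equals_maximal_cuts} supplies the ``only if'' direction essentially verbatim, and Lemma~\ref{lm:maximal_cuts_equals_maximal_seps} supplies the ``if'' direction verbatim; their conjunction is the theorem.

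Concretely, for the forward implication I would take a maximal pairwise parallel family $\F$ of minimal separators of $L(\Prof)$ in which every member is legal and invoke Lemma~\ref{lm:maximal_seps_equals_maximal_cuts}; this already produces a complete pairwise parallel family of legal minimal cuts of $G(\Prof)$, since that lemma exhibits, for every internal edge $e$ of every input tree $T$, a separator in $\F$ whose intersection with $L(T)$ is exactly $\{e\}$, which is the defining property of completeness. For the reverse implication I would start from a complete pairwise parallel family $\F'$ of legal minimal cuts of $G(\Prof)$, reinterpret $\F'$ inside $L(\Prof)$ via Theorem~\ref{thm:cuts_seps} and Lemma~\ref{lm:parallel} as a pairwise parallel family of legal minimal separators, extend $\F'$ to any maximal pairwise parallel family $\F$ of minimal separators of $L(\Prof)$, and then appeal to Lemma~\ref{lm:maximal_cuts_equals_maximal_seps} to conclude that every separator added in the extension is also legal.

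At the level of this theorem there is no real obstacle; the substantive work lives inside the two lemmas (the clique-tree argument of Lemma~\ref{lm:maximal_seps_equals_maximal_cuts} that exploits Theorem~\ref{thm:parallel_minimal_ct} and Lemma~\ref{lm:two_cliques} on the restricted minimal triangulation $L(\Prof)_{\F}$, and the contradiction argument of Lemma~\ref{lm:maximal_cuts_equals_maximal_seps} that uses a separating internal edge in $T$ to block two nonincident edges of $T$ from sharing a separator parallel to all members of $\F'$). The only bookkeeping left is to match the hypotheses and conclusions of the two lemmas against the two halves of the biconditional, which line up exactly.
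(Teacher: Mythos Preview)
Your proposal is correct and matches the paper's approach exactly: the paper states in a single sentence that the theorem follows from Lemmas~\ref{lm:maximal_seps_equals_maximal_cuts} and~\ref{lm:maximal_cuts_equals_maximal_seps}, and you have identified precisely this. The additional commentary you give about how each lemma lines up with one direction of the biconditional is accurate and more detailed than the paper itself.
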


The next result follows from Theorems~\ref{thm:edge_label} and~\ref{thm:maximal_seps_equals_maximal_cuts}.

\begin{theorem}\label{thm:cuts_equal_compatibility}
A profile $\Prof$ of unrooted phylogenetic trees is compatible if and only if there exists a complete set of pairwise parallel legal minimal cuts for $G(\Prof)$.
\end{theorem}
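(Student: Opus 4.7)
The plan is straightforward: simply chain Theorem~\ref{thm:edge_label} with Theorem~\ref{thm:maximal_seps_equals_maximal_cuts}. Theorem~\ref{thm:edge_label} gives the equivalence between compatibility of $\Prof$ and the existence of a maximal set $\F$ of pairwise parallel minimal separators of $L(\Prof)$ in which every separator is legal. Theorem~\ref{thm:maximal_seps_equals_maximal_cuts} in turn gives the equivalence between the existence of such a set in $L(\Prof)$ and the existence of a complete set of pairwise parallel legal minimal cuts in $G(\Prof)$. Composing these two biconditionals yields the theorem.

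Concretely, I would write a two-sentence proof: assume $\Prof$ is compatible; by Theorem~\ref{thm:edge_label}, $L(\Prof)$ has a maximal set of pairwise parallel minimal separators, all legal; by Theorem~\ref{thm:maximal_seps_equals_maximal_cuts}, $G(\Prof)$ has a complete set of pairwise parallel legal minimal cuts. Conversely, given such a set of cuts in $G(\Prof)$, Theorem~\ref{thm:maximal_seps_equals_maximal_cuts} produces the required separator system in $L(\Prof)$, and then Theorem~\ref{thm:edge_label} gives compatibility of $\Prof$.

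There is no real obstacle here, since all the work is already packaged in the cited theorems. The substance lies earlier in the section: Theorem~\ref{thm:cuts_seps} identifies legal minimal separators of $L(\Prof)$ with legal minimal cuts of $G(\Prof)$, Lemma~\ref{lm:parallel} shows the parallelism notions coincide under this identification, and Lemmas~\ref{lm:maximal_seps_equals_maximal_cuts} and~\ref{lm:maximal_cuts_equals_maximal_seps} reconcile "maximal with every separator legal" on the $L(\Prof)$ side with "complete" on the $G(\Prof)$ side (this is where the asymmetric definitions require care, since completeness is phrased via internal edges of input trees rather than via maximality). Once those are in hand, the theorem itself is a one-line composition and no further machinery is needed.
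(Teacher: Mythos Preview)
Your proposal is correct and matches the paper's own treatment exactly: the paper states that the result follows from Theorems~\ref{thm:edge_label} and~\ref{thm:maximal_seps_equals_maximal_cuts}, which is precisely the two-step composition you describe. Your additional remarks about where the real work lies (Theorem~\ref{thm:cuts_seps}, Lemma~\ref{lm:parallel}, and Lemmas~\ref{lm:maximal_seps_equals_maximal_cuts}--\ref{lm:maximal_cuts_equals_maximal_seps}) are accurate and appropriate context.
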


An analogue of Theorem~\ref{thm:cuts_equal_compatibility} can be derived for the edge label intersection graph $L(\Prof)$ as follows. A set $\F$ of legal minimal separators of $L(\Prof)$ is \emph{complete}, if for every internal edge $e$ of an input tree $T$, there exists a separator $F \in \F$ where $e$ is the only vertex of $L(T)$ in $F$. The next theorem follows from Theorems~\ref{thm:cuts_seps} and~\ref{thm:cuts_equal_compatibility}, and Lemma~\ref{lm:parallel}.

\begin{theorem}
A profile $\Prof$ of unrooted phylogenetic trees is compatible if and only if there exists a complete set of pairwise parallel legal minimal separators for $L(\Prof)$.
\end{theorem}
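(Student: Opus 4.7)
The plan is to view this theorem as a direct translation of Theorem~\ref{thm:cuts_equal_compatibility} from the language of legal minimal cuts of $G(\Prof)$ to the language of legal minimal separators of $L(\Prof)$, using the dictionary supplied by Theorem~\ref{thm:cuts_seps} and Lemma~\ref{lm:parallel}. The key observation is that a single subset $F$ of edges of $G(\Prof)$ is literally the same object as a subset of vertices of $L(\Prof)$, since $V(L(\Prof)) = E(G(\Prof))$.

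First I would invoke Theorem~\ref{thm:cuts_equal_compatibility}: $\Prof$ is compatible if and only if there exists a complete set $\F$ of pairwise parallel legal minimal cuts of $G(\Prof)$. It then suffices to show that under the identification $V(L(\Prof)) = E(G(\Prof))$, $\F$ is such a family for $G(\Prof)$ if and only if it is a complete set of pairwise parallel legal minimal separators for $L(\Prof)$. Theorem~\ref{thm:cuts_seps} gives precisely the member-by-member equivalence: each $F \in \F$ is a legal minimal cut of $G(\Prof)$ if and only if it is a legal minimal separator of $L(\Prof)$. Lemma~\ref{lm:parallel} then ensures that pairwise parallelism transfers in both directions, so the ``pairwise parallel'' clause is preserved.

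It remains to check that the completeness conditions agree. In the cut formulation, completeness says that for every internal edge $e$ of every input tree $T$ there is some $F \in \F$ in which $e$ is the only edge of $T$; in the separator formulation, completeness says that for the same $e$ there is some $F \in \F$ in which $e$ is the only vertex of $L(T)$. Since the edges of $T$ are, by definition, the vertices of $L(T)$, these two statements are identical clause-for-clause. Hence completeness is preserved in both directions, and chaining the two equivalences yields the claimed biconditional.

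There is no real obstacle: the argument is purely an unpacking of the definitions, as every nontrivial content—the translation of ``legal minimal cut'' to ``legal minimal separator,'' and of ``parallel'' between the two graphs—has already been proved in Theorem~\ref{thm:cuts_seps} and Lemma~\ref{lm:parallel}. The only care needed is in writing the short argument for completeness, which is immediate from the identification $E(T) = V(L(T))$.
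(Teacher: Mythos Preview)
Your proposal is correct and follows exactly the approach indicated in the paper, which states that the theorem follows from Theorems~\ref{thm:cuts_seps} and~\ref{thm:cuts_equal_compatibility} together with Lemma~\ref{lm:parallel}. Your explicit check that the two notions of completeness coincide via the identification $E(T)=V(L(T))$ is precisely the unpacking the paper leaves implicit.
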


\section{Relationship to splits compatibility}
A \emph{split} of a label set $L$ is a bipartition of $L$. We denote a split $\{X, Y\}$ by $X|Y$. Let $T$ be a phylogenetic tree. Consider an internal edge $e$ of $T$. Deletion of $e$ breaks $T$ into two subtrees $T_1$ and $T_2$. Let $L_1$ and $L_2$ denote the set of all labels in $T_1$ and $T_2$ respectively. Set $\{L_1, L_2\}$ is a split of $\Leaves(T)$. We denote the split corresponding to edge $e$ of $T$ by $\Sigma_e(T)$ and we denote by $\Sigma(T)$ the set of all splits corresponding to all internal edges of $T$. 

A tree $T$ \emph{displays} a split $X$ if there exists an internal edge $e$ of $T$ where $\Sigma_e(T) = X$. Then, we also say $T$ is \emph{compatible} with $X$. A set of splits is compatible if there exists a tree which displays all the splits in the set. Two splits $A_1|A_2$ and $B_1|B_2$ are compatible if and only if at least one of $A_1 \cap B_1$, $A_1 \cap B_2$, $A_2 \cap B_1$ and $A_2 \cap B_2$ is empty~\cite{SempleSteel03}. By the Splits Equivalence Theorem~\cite{Buneman71,SempleSteel03}, a collection of splits is compatible if and only if every pair is compatible.

\begin{lemma}
Let $F$ be a legal minimal cut of $G(\Prof)$ and let $G_1$ and $G_2$ be the two connected components of $G(\Prof)-F$. Then, $\Leaves(G_1)| \Leaves(G_2)$ is a split of $\Leaves(\Prof)$. 
\end{lemma}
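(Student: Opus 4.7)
The plan is to reduce the lemma to showing that each of the two components $G_1, G_2$ of $G(\Prof)-F$ contains at least one leaf vertex. The set-theoretic half of the claim is free: since $F$ is a minimal cut, $G(\Prof)-F$ has exactly two components whose vertex sets partition $V(G(\Prof))$, hence $\Leaves(G_1)$ and $\Leaves(G_2)$ are disjoint subsets of $\Leaves(\Prof)$ whose union equals $\Leaves(\Prof)$. So the entire content lies in proving nonemptiness.

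Fix $i \in \{1,2\}$. By condition (LC2), $G_i$ contains at least one edge $e'$, and $e'$ belongs to some input tree $T$. If $F \cap E(T) = \emptyset$, then $T$ itself is a connected subgraph of $G(\Prof)-F$ containing $e'$, so $T \subseteq G_i$ and the nonemptiness is immediate from $\Leaves(T) \neq \emptyset$. Otherwise, by (LC1) all edges of $F \cap E(T)$ are incident on a common vertex $v$ of $T$. I would let $T'$ be the connected component of $T - (F \cap E(T))$ containing $e'$; since $T'$ is a connected subgraph of $G(\Prof)-F$ meeting $G_i$, we have $T' \subseteq G_i$, so it suffices to produce a leaf of $T$ inside $T'$.

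The core step is then the following subtree argument. View $T'$ as a tree in its own right. It contains the edge $e'$, so it has at least two vertices, and therefore at least two vertices of degree one in $T'$. Let $w$ be any such degree-one vertex with $w \neq v$. Because every removed edge of $T$ was incident on $v$, the $T$-degree of $w$ coincides with its $T'$-degree, namely $1$. Hence $w \in \Leaves(T) \subseteq \Leaves(\Prof)$, and $w \in V(T') \subseteq V(G_i)$ gives $w \in \Leaves(G_i)$.

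The main obstacle is the possible degenerate configuration in which $v \in V(T')$ and $v$ itself looks like the only degree-one vertex of $T'$; I expect to dispatch this using the standard fact that any tree on at least two vertices has at least two leaves, so there is always a second degree-one vertex $w \neq v$ to which the preceding argument applies. This is also why both parts of the legality definition are needed: (LC2) gives the edge $e'$ that rules out $T'$ being a single vertex, and (LC1) ensures that $v$ is the unique vertex whose $T$-degree differs from its $T'$-degree, which is what lets us conclude $w$ is a leaf of $T$.
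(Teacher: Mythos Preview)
Your proof is correct and follows essentially the same strategy as the paper: reduce to showing each $G_i$ contains a leaf, use (LC2) to obtain an edge $e'$ of some tree $T$ inside $G_i$, dispose of the case $F\cap E(T)=\emptyset$ trivially, and in the remaining case exploit (LC1) to find a leaf of $T$ that survives in $G_i$. The only cosmetic difference is in that last step: the paper first disposes of the case where $e'$ is non-internal, and otherwise splits $T$ at $e'$ and argues that one of the two resulting subtrees is disjoint from $F$ and hence lies entirely in $G_i$; you instead take the component $T'$ of $T-(F\cap E(T))$ containing $e'$ and run a degree-count to locate a leaf $w\neq v$, which lets you skip the internal/non-internal case distinction.
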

\begin{proof}
Consider $G_i$ for any $i \in \{1,2\}$. We will show that $\Leaves(G_i)$ is non-empty. Since $F$ is a legal minimal cut, $G_i$ contains at least one edge $e$ of $G(\Prof)$. If $e$ is a non internal edge, then $\Leaves(G_i)$ is non-empty.  Assume that $e=\{u,v\}$ is an internal edge of some input tree $T$. If $F$ does not contain an edge of $T$, then $\Leaves(T) \subseteq \Leaves(G_i)$ and thus $\Leaves(G_i)$ is non empty. Assume that $F$ contains one or more edges of $T$. Let $T_u$, $T_v$ be the two subtrees of $T-e$. Since $F$ is a legal minimal cut, $F$ contains edges from either $T_u$ or $T_v$ but not both. Without loss of generality assume that $F$ does not contain edges from $T_u$. Then, every edge of $T_u$ will be in the same component as $e$. Since $T_u$ contains at least one leaf vertex, $\Leaves(G_i)$ is non-empty.  
Thus, $\Leaves(G_1)|\Leaves(G_2)$ is a split of $\Leaves(\Prof)$.
\end{proof}

For any legal minimal cut $F$ of $G(\Prof)$, we denote by $\Sigma(F)$ the split of $\Leaves(\Prof)$ induced by $F$.  If $\F$ is a set of legal minimal cuts of $G(\Prof)$, then we denote $\bigcup_{F \in \F}\Sigma(F)$ by $\Sigma(\F)$.

\begin{lemma}\label{lm:parallel_implies_compatible}
Let $F_1$ and $F_2$ be two parallel legal minimal cuts of $G(\Prof)$. Then, $\Sigma(F_1)$ and $\Sigma(F_2)$ are compatible.
\end{lemma}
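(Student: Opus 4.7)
The plan is to deduce the conclusion from the pairwise compatibility criterion already cited in the paper: two splits $X_1 \mid X_2$ and $Y_1 \mid Y_2$ of the same label set are compatible if and only if at least one of the four intersections $X_i \cap Y_j$ is empty. So it suffices to exhibit one empty intersection between the sides of $\Sigma(F_1)$ and the sides of $\Sigma(F_2)$.

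Let $A_1, A_2$ be the two connected components of $G(\Prof) - F_1$, giving $\Sigma(F_1) = \Leaves(A_1) \mid \Leaves(A_2)$, and let $B_1, B_2$ be the two components of $G(\Prof) - F_2$, giving $\Sigma(F_2) = \Leaves(B_1) \mid \Leaves(B_2)$. By the definition of parallel minimal cuts, at most one of $A_1, A_2$ contains an edge of $F_2$; without loss of generality, $E(A_2) \cap F_2 = \emptyset$.

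The key step is to promote this statement about edges to one about vertex sets, and for this I need a small observation: in a minimal cut $F_1$ every edge must have its two endpoints in different components of $G(\Prof) - F_1$, since otherwise that edge could be dropped from $F_1$ without restoring connectivity, contradicting minimality. Hence every edge of $G(\Prof)$ with both endpoints in $V(A_2)$ already belongs to $E(A_2)$. Combined with $E(A_2) \cap F_2 = \emptyset$, this means the subgraph of $G(\Prof) - F_2$ induced on $V(A_2)$ coincides with the connected graph $A_2$, so all of $V(A_2)$ sits inside a single component of $G(\Prof) - F_2$, call it $B_2$.

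From $V(A_2) \subseteq V(B_2)$ I obtain $\Leaves(A_2) \subseteq \Leaves(B_2)$, and therefore $\Leaves(A_2) \cap \Leaves(B_1) = \emptyset$. By the splits compatibility criterion, $\Sigma(F_1)$ and $\Sigma(F_2)$ are compatible. The only subtle point is the edges-of-a-minimal-cut observation used to convert the parallel hypothesis (an edge-level statement) into the vertex containment $V(A_2) \subseteq V(B_2)$; once that is in place, the argument is a one-line application of the pairwise criterion.
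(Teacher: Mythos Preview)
Your proof is correct. The approach differs from the paper's in structure: the paper argues by contradiction, assuming all four intersections $\Leaves(A_i)\cap\Leaves(B_j)$ are nonempty, picking witness leaves, and using paths between them to locate edges of $F_2$ in \emph{both} components of $G(\Prof)-F_1$, which contradicts parallelism. You instead argue directly: from parallelism you fix a component $A_2$ with $E(A_2)\cap F_2=\emptyset$, and use the elementary fact that every edge of a minimal cut straddles the two sides to conclude that $A_2$ survives intact in $G(\Prof)-F_2$, hence $V(A_2)$ lies inside a single $B_j$. Your route is slightly cleaner in that it avoids the contradiction setup and the path-chasing; the paper's route has the minor advantage of not needing the straddling observation explicitly. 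Both arguments are short and essentially dual to one another.
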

\begin{proof}
Let $\Sigma(F_1) = U_1| U_2$ and $\Sigma(F_2) = V_1|V_2$. Assume that $\Sigma(F_1)$ and $\Sigma(F_2)$ are incompatible. Thus, the intersection of $U_i$ and $V_j$ for every $i \in \{1,2\}$ and $j \in \{1,2\}$ is non-empty. Let $a \in U_1 \cap V_1$, $b \in U_1 \cap V_2$, $c \in U_2 \cap V_1$ and $d \in U_2 \cap V_2$.  Since $\{a, b\} \subseteq U_1$, there exists a path $\pi_1$ between leaf vertices $a$ and $b$ in $G(\Prof)-F_1$. But $a$ and $b$ are in different components of $G(\Prof)-F_2$. Thus, an edge $e_1$ in path $\pi_1$ in $G(\Prof)$ is in the cut $F_2$. Similarly, $\{c,d\} \subseteq U_2$ and there exists a path $\pi_2$ between labels $c$ and $d$ in $G(\Prof)-F_1$. Since $c$ and $d$ are in different components of $G(\Prof)-F_2$, cut $F_2$ contains an edge $e_2$ in path $\pi_2$. But paths $\pi_1$ and $\pi_2$ are in different components of $G(\Prof)-F_1$. So, edges $e_1$ and $e_2$ are in different components of $G(\Prof)-F_1$. Since $\{e_1, e_2\} \subseteq F_2$, the cuts $F_1$ and $F_2$ are not parallel, which is a contradiction.
\end{proof}

\begin{lemma}\label{lm:splits_compatibility}
Let $\F$ be a complete set of pairwise parallel legal minimal cuts of the display graph of profile $\Prof$. The following statements hold.
\begin{enumerate}[(i)]
\item $\Sigma(\F)$ is compatible.
\item If $S$ is compatible tree for $\Sigma(\F)$, then $S$ is a compatible tree for $\Prof$.
\item There exists a compatible tree $S$ of $\Prof$ where $\Sigma(S) = \Sigma(\F)$.
\end{enumerate}
\end{lemma}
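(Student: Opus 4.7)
The plan is to take the three parts in sequence, with (i) a one-line consequence of earlier results, (ii) the main content, and (iii) a short corollary of the two. For (i), I would argue that by Lemma~\ref{lm:parallel_implies_compatible} any two parallel legal minimal cuts of $G(\Prof)$ induce compatible splits; since $\F$ is pairwise parallel, every pair in $\Sigma(\F)$ is compatible, and the Splits Equivalence Theorem gives a tree displaying all of $\Sigma(\F)$.

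For (ii), the crux is to show that for every $T \in \Prof$ and every internal edge $e = \{u,v\}$ of $T$, the split $\Sigma_e(T)$ arises as the restriction to $\Leaves(T)$ of some split in $\Sigma(\F)$. By completeness of $\F$, pick $F_e \in \F$ such that $e$ is the only edge of $T$ in $F_e$, and let $G_1, G_2$ be the two connected components of $G(\Prof) - F_e$. The key observation is that because $F_e$ is a \emph{minimal} cut, every edge of $F_e$ must have one endpoint in $G_1$ and the other in $G_2$; otherwise, adding such an edge back to $G(\Prof) - F_e$ would still leave the graph disconnected, contradicting minimality. Applied to $e$ itself, this places $u$ and $v$ on opposite sides, say $u \in G_1$ and $v \in G_2$. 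Let $T_u$ and $T_v$ be the subtrees of $T - e$ containing $u$ and $v$ respectively. Since no edge of $T_u$ lies in $F_e$, $T_u$ is connected in $G(\Prof) - F_e$ and therefore lies entirely inside $G_1$; symmetrically $T_v \subseteq G_2$. Hence $\Leaves(T_u) = \Leaves(T) \cap \Leaves(G_1)$ and $\Leaves(T_v) = \Leaves(T) \cap \Leaves(G_2)$, so $\Sigma(F_e)$ restricted to $\Leaves(T)$ equals $\Sigma_e(T)$. With this in hand, every internal split of every input tree $T$ is the restriction of some split in $\Sigma(S) = \Sigma(\F)$ to $\Leaves(T)$, so $\Sigma(T) \subseteq \Sigma(S|\Leaves(T))$, which is the standard criterion for $S$ displaying $T$.

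Part (iii) then falls out immediately: by (i) combined with the Splits Equivalence Theorem, there is a tree $S$ on label set $\Leaves(\Prof)$ with $\Sigma(S) = \Sigma(\F)$, and (ii) certifies that this $S$ is a compatible tree for $\Prof$. The main obstacle in the whole argument is the step in (ii) that places $T_u$ and $T_v$ in different components of $G(\Prof) - F_e$; everything else is either bookkeeping with splits or a direct appeal to the Splits Equivalence Theorem. That step relies on two ingredients working together: the minimality of $F_e$ as a cut (forcing each cut edge, and in particular $e$, to cross between the two sides) and the completeness condition that $e$ is the unique edge of $T$ in $F_e$ (so that $T_u$ and $T_v$ survive removal of $F_e$ as connected subgraphs).
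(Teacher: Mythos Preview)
Your proposal is correct and follows essentially the same route as the paper's proof: part~(i) via Lemma~\ref{lm:parallel_implies_compatible} and the Splits Equivalence Theorem, part~(ii) by using completeness to pick $F_e$ and minimality of the cut to place the endpoints of $e$ (hence the two subtrees $T_u,T_v$) in different components, and part~(iii) as an immediate corollary. One small slip: in part~(ii) you write $\Sigma(S)=\Sigma(\F)$, but an arbitrary compatible tree $S$ for $\Sigma(\F)$ only guarantees $\Sigma(\F)\subseteq\Sigma(S)$; this does not affect your argument, since you only need $\Sigma(F_e)\in\Sigma(S)$.
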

\begin{proof}

(i) The statement follows from Lemma~\ref{lm:parallel_implies_compatible} and the splits equivalence theorem.

(ii) Let $T$ be an input tree of $\Prof$ and let $S' = S|\Leaves(T)$. We will show that $S'$ displays $\Sigma(e)$ for every internal edge $e$ of $T$. Let $\Sigma(e) = A|B$. There exists a cut $F \in \F$ where $e$ is the only edge of $T$ in $F$. Since $F$ is a minimal cut, the endpoints of $e$ are in different components of $G(\Prof)-F$. Thus, if $\Sigma(F) = A'|B'$ then up to relabeling of sets we have $A \subseteq A'$ and $B \subseteq B'$. Because $S$ displays $\Sigma(F)$, $S'$ also displays $\Sigma(e)$. Since $S'$ displays all the splits of $T$, $T$ can be obtained from $S'$ by contraction of zero or more edges~\cite{SempleSteel03}.  Thus, $S$ displays $T$. Since $S$ displays every tree in $\Prof$, $S$ is a compatible tree of $\Prof$.

(iii) This is a consequence of the well-known fact (see, e.g.,~\cite{SempleSteel03}) that if $X$ is a set of compatible splits, there exists a tree $T$ where $\Sigma(T) = X$.
\end{proof}

\begin{example}
For the cuts of the display graph in Fig.~\ref{fig:example} given in Example 1, we have $\Sigma(F_1) =abc | defg$, $\Sigma(F_2) = abcfg | de$, $\Sigma(F_3) = ab | cdefg$, and $\Sigma(F_4) = abcde | fg$.  Note that these splits are pairwise compatible.
\end{example}

\section{Relationship to legal triangulation}
Let $\Prof$ be a profile of phylogenetic trees. Theorems~\ref{thm:lt} and~\ref{thm:cuts_equal_compatibility} together imply that if $G(\Prof)$ has a complete set $\F$ of pairwise parallel legal minimal cuts, there also exists a legal triangulation of $G(\Prof)$. As shown in~\cite{Vakati11}, a legal triangulation of $G(\Prof)$ can be derived from a compatible tree of $\Prof$. In this section, we show how to derive a legal triangulation of $G(\Prof)$ directly from $\F$ without building a compatible tree. This shows the relationship between complete sets of pairwise parallel legal minimal cuts and legal triangulations of display graphs. By Theorems~\ref{thm:edge_label} and ~\ref{thm:cuts_seps} and Lemma~\ref{lm:parallel}, this also shows the relationship between restricted triangulations of edge label intersection graphs and legal triangulations of display graphs.

A complete set $\F$ of pairwise parallel legal minimal cuts of $G(\Prof)$ is \emph{minimal} if no proper subset of $\F$ is also complete. Let $\F$ be a minimal complete set of pairwise parallel legal minimal cuts of $G(\Prof)$. 

At a high level, we construct a legal triangulation of $G(\Prof)$ from $\F$ as follows. Consider any cut $F \in \F$. We build a pair $D_F=(X,Y)$ where $X$ and $Y$ are subsets of $E(F)$ and are vertex separators of $G(\Prof)$. Let $A$ and $B$ be the connected components of $G(\Prof)-F$. Also, let $A'$, $B'$ be the subgraphs induced in $G(\Prof)$ by the vertex sets $V(A) \cup \{X \cap Y\}$ and $V(B) \cup \{X \cap Y\}$ respectively. To legally triangulate $G(\Prof)$ we first triangulate the subgraph of $G(\Prof)$ induced by the vertex set $X \cup Y$ and then triangulate the subgraphs $A'$ and $B'$. To triangulate either of those subgraphs, we again use vertex separators built from endpoints of a different cut. We make sure that, for every set $D_I$ for some $I \in \F$ built after $D_F$, both the sets of $D_I$ are subsets of either $V(A')$ or $V(B')$ but not both.

We now give the details of our construction. We consider the elements of $\F$ in some arbitrary, but fixed order, and use a set $W$ to record all such cuts $F \in \F$  for which $D_F$ has already been constructed. Initially $W$ is empty. For each successive cut $F$ in $\F$, we do the following. Let $F' \subseteq F$ be the set of all internal edges $e \in F$ such that $e$ is the only edge of the tree containing $e$ that is in $F$.  Let $A$ and $B$ be the two connected components of $G(\Prof)-F$. Let $X =  V(A) \cap V(F')$ and $Y= V(B) \cap V(F')$. For every edge $e $ of $F'$ whose endpoints are in different sets of some set $D_I$ where $I \in W$, we do the following.  Let $Q$ be the connected component of $G(\Prof)-I$ where $E(Q) \cap F \neq \emptyset$. Note that $Q$ is the only such component of $G(\Prof)-I$. Let $v$ be the vertex of $e$ in $Q$. Replace the endpoints of $e$ in sets $X$ and $Y$ by $v$. For every non internal edge $f \in F$ where $f$ is the only edge of the tree containing $f$ that is in $F$, add the internal vertex of $f$ to both sets $X$ and $Y$. If there exists a tree $T$ where more than one edge of $T$ is in $F$, add the common endpoint of all the edges of $T$ in $F$ to both sets $X$ and $Y$. Set $D_F$ to $(X, Y)$. Add $F$ to $W$. 

For every $F \in \F$, let $O_F$ be the set defined as follows.  Let $D_F=(X,Y)$ and let $X=\{x_1, \cdots, x_m, z_1,\cdots, z_p\}$ and $Y =  \{y_1, \cdots, y_m, z_1, \cdots, z_p\}$, where $m > 0$, $p \geq 0$ and for every $i \in [m]$, $\{x_i, y_i\}$ is an internal edge of $G(\Prof)$. Then, $O_F$ consists of sets $\{x_1, \cdots, x_j, y_j, \cdots, y_m, z_1, \cdots, z_p\}$ for every $j \in [m]$.

Let $G'$ be the graph derived from $G(\Prof)$ as follows. For every cut $F \in \F$ where $D_F=(X,Y)$, add edges to make each of the sets $X$ and $Y$ a clique. For every cut $F \in \F$ and for every $Y \in O_F$, add edges to make $Y$ a clique. For every leaf $\ell$, make the vertices of $N_{G(\Prof)}(\ell)$ a clique.

\begin{theorem}\label{thm:lt_convert}
$G'$ is a legal triangulation of $G(\Prof)$.
\end{theorem}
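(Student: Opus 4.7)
The plan is to verify that $G'$ satisfies the three requirements that together make it a legal triangulation of $G(\Prof)$: that $V(G')=V(G(\Prof))$ with $E(G(\Prof))\subseteq E(G')$, that conditions (LT1) and (LT2) hold, and that $G'$ is chordal. The first of these is immediate from the construction, so only the last two need argument.

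Both (LT1) and (LT2) can be handled by a case analysis over the sets whose vertices the construction makes pairwise adjacent, namely the sets $X,Y$ of every $D_F$, the elements of every $O_F$, and the leaf neighborhoods $N_{G(\Prof)}(\ell)$. For (LT2), I would observe that every vertex appearing in any of these sets is internal: the $x_i,y_i$ are endpoints of internal edges, each $z_j$ is either the internal endpoint of a non-internal cut edge or the common endpoint of several edges of an input tree in the cut, and $N_{G(\Prof)}(\ell)$ contains one internal neighbor per input tree containing $\ell$. For (LT1), the crucial fact is that each cut $F\in\F$ is legal, so for any input tree $T$ the $T$-edges in $F$ all meet at a common vertex that is absorbed into a $z$; this, together with the replacement step for edges straddling earlier cuts, prevents any added clique from containing two distinct nonincident edges of a single tree and, more generally, any two $G(\Prof)$-edges one of which is internal. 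The leaf case follows because each tree contributes at most one edge to a given leaf.

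The main obstacle is chordality of $G'$, which I would prove by exhibiting an explicit clique tree $\tau$ built from $\F$. Because the cuts of $\F$ are pairwise parallel, the construction's invariant forces every $D_I$ created after $D_F$ to live entirely inside one side of $F$, yielding a laminar hierarchy on $\F$ that I will treat as a rooted tree. For each $F$ with $D_F=(X,Y)$, the $m$ bags in $O_F$ form a path in which consecutive bags differ by the swap $x_j\leftrightarrow y_{j-1}$, with the first and last bags containing $X$ and $Y$ respectively; I attach the recursively built clique trees for $A'$ and $B'$ at the two ends of this path, and hang a leaf bag $N_{G(\Prof)}(\ell)\cup\{\ell\}$ inside the subtree containing $\ell$. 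The hard part is verifying the coherence property: for every vertex $w\in V(G')$, the bags of $\tau$ containing $w$ induce a subtree. Vertices local to one recursive region are immediate, but a $z$-vertex shared between the two sides of a cut, or a vertex propagated by the replacement step from an earlier cut, can appear in bags from several different $D_I$'s, and one must use parallelism of $\F$ to show that $w$ persists along the entire path in $\tau$ joining any two such occurrences. Once coherence is verified, $\tau$ is a clique tree for a chordal graph with edge set $E(G')$, and chordality combined with (LT1) and (LT2) finishes the proof.
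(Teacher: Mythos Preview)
Your overall strategy is sound, but it diverges substantially from the paper's and leaves two real gaps.

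First, on (LT1): you argue only that none of the \emph{generating} sets (the $X$, $Y$, elements of $O_F$, and the $N_{G(\Prof)}(\ell)$) contains an internal edge together with another $G(\Prof)$-edge. But (LT1) concerns every clique of $G'$, and a clique of $G'$ need not coincide with one of these sets. This gap can be closed, but only \emph{after} your clique-tree construction succeeds: once you have a valid tree decomposition of $G'$ whose bags are cliques, the Helly property forces every clique of $G'$ into some bag, and then your per-bag check suffices. As written, though, you treat (LT1) and chordality independently, so the (LT1) argument is incomplete on its own. The paper handles (LT1) quite differently: it first establishes two structural lemmas (Lemma~\ref{lm:helper_lemma} and Lemma~\ref{lm:forbidden_edges}) describing which edges cannot appear in $G'$ across a cut, together with Lemma~\ref{lm:vf_legal} showing the subgraph on $F_\cup$ is itself legal; then for any putative clique containing two internal edges it takes the differentiating cut of one of them and derives a contradiction from those lemmas.

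Second, on chordality: the paper does \emph{not} build a clique tree. It argues directly by contradiction, assuming a chordless cycle $C$ of length $\ge 4$ and eliminating it via a case analysis driven by Lemmas~\ref{lm:helper_lemma}--\ref{lm:vf_legal}. Your clique-tree route is a legitimate alternative and, if carried through, would yield more structural information (an explicit decomposition rather than mere non-existence of long chordless cycles). However, you correctly identify coherence as the crux and do not prove it. The delicate point is exactly the one you flag: a vertex $z\in F_\cap$, or a vertex retained by the replacement step, can appear in bags arising from several different cuts, and you must show its bag-set is connected in $\tau$. This requires essentially the same ``if $x\in F_\cup$ lies on the far side of $I$ then $x\in I_\cap$'' fact that the paper isolates as Lemma~\ref{lm:helper_lemma}; without it, the coherence argument cannot be completed. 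You should also verify explicitly that every edge of $G'$---in particular every original internal edge $\{x_i,y_i\}$---lies in some bag (it does: the $j=i$ member of $O_F$ for the differentiating cut $F$), since this is what lets you pull (LT1) back to a per-bag statement.
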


To prove Theorem~\ref{thm:lt_convert} we first prove few useful lemmas. For every cut $F \in \F$ where $D_F=(X,Y)$, we denote the sets $X \cup Y$, $X \cap Y$ by $F_\cup$ and $F_\cap$ respectively. For any internal edge $e$, we call the cut $F \in \F$ a \emph{differentiating} cut of $e$ if $e$'s endpoints are in different sets of $D_F$. Note that, since $\F$ is minimal, every cut in $\F$ is a differentiating cut of some internal edge. A clique of $G'$ is \emph{illegal} if it contains a fill-in edge with a leaf vertex as an endpoint or if it contains an internal edge along with any another edge of $G(\Prof)$. Graph $G'$ is a legal triangulation if and only if $G'$ does not contain an illegal clique.

\begin{lemma}\label{lm:helper_lemma}
Let $F$ and $I$ be two distinct cuts of $\F$. Let $x$ be a vertex where $x \in F_\cup$ and $x$ is in the connected component of $G(\Prof)-I$ which does not contain edges of $F$. Then, $x \in I_\cap$.
\end{lemma}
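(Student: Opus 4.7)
The plan is to trace through the construction of $D_F$ to identify how $x$ enters $F_\cup$ and, in each case, argue that the corresponding clause in the $D_I$ construction places $x$ in $I_\cap$. I begin with two structural observations. Every vertex inserted into $F_\cup$ is an endpoint of at least one edge of $F$, so fix a witness edge $e^* \in F$ with $x$ as an endpoint. Since $x$ lies in the component $Q'$ of $G(\Prof)-I$ that has no edges of $F$ and $e^* \in F$, the edge $e^*$ cannot belong to $Q'$, forcing $e^* \in I$; by minimality of $I$ the endpoints of $e^*$ are in different components of $G(\Prof)-I$, so writing $e^* = \{x, z\}$ the other endpoint $z$ lies in the $F$-edge-containing component $Q$. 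I will also use that the non-$F$ edges of $I$ all lie in the component $A$ of $G(\Prof)-F$ containing $x$: at least one non-$I$ edge of $F$ exists (else $F \subseteq I$ and minimality gives $F = I$, contradicting distinctness), and its endpoints lie in $V(Q)$; the connected subgraph induced by $V(A)$ therefore meets both $V(Q)$ and $V(Q')$, and a path from $x$ to that endpoint within $V(A)$ must cross an $I$-edge, so by the parallel-cut property all non-$F$ edges of $I$ lie in $A$.

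Two of the cases are direct. If $x$ is the internal vertex of a non-internal singleton edge $f = \{x, \ell\} \in F$ of some tree $T$, then $f \in I$, and $x$ enters $I_\cap$ either via the non-internal singleton clause of $D_I$ (if $f$ is the only edge of $T$ in $I$) or via the common-endpoint clause with common endpoint $x$ (since the leaf $\ell$ has degree one in $T$ and cannot be a common endpoint of two or more edges). If $x$ is the common endpoint of multiple $F$-edges of some tree $T$, each such edge lies in $I$, so $T$ has multiple edges in $I$ all incident on $x$, and legality forces the common endpoint to be $x$, which the $D_I$ construction places in $I_\cap$.

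The remaining case is when $x$ is an endpoint of an internal edge $e = \{x, z\} \in F'$. If $T$ has multiple edges in $I$, the legal common endpoint must be $x$: if it were $z$, the non-$F$ edges of $T$ in $I$ would sit on $z$'s side of $G(\Prof)-F$, contradicting that non-$F$ edges of $I$ lie in $A$. If instead $e \in I'$, I split on the processing order of $F$ and $I$. When $F$ is processed first and $x$ was unreplaced in $D_F$, the endpoints of $e$ are in different sets of $D_F$, triggering the replacement step in $D_I$; the replacement vertex is the endpoint of $e$ in the $I$-edge-containing component $A$, namely $x$, so $x \in I_\cap$. When $I$ is processed first, the absence of a $D_F$-level replacement for $e$ (which would otherwise contradict $x$'s unreplacedness) forces a joint replacement of $e$ already inside $D_I$ via some earlier $J \in W$, and the parallel-cut structure pins down the replacement vertex as $x$. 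The case in which $x$ itself entered $F_\cup$ as a replacement vertex $v_K$ is handled by the same reasoning applied along the chain of replacements. The main obstacle is this singleton sub-case $e \in F' \cap I'$, where the processing-order-dependent bookkeeping must be threaded through the parallel-cut structure to confirm that the replacement vertex is exactly $x$.
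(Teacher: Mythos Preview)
Your overall strategy --- case on the way $x$ enters $F_\cup$ and, in each case, match it to a clause of the $D_I$ construction --- is the same as the paper's. The two direct cases (non-internal singleton and common-endpoint) and your structural observation that the non-$F$ edges of $I$ all lie in the component $A$ of $G(\Prof)-F$ containing $x$ are correct and are exactly the ingredients the paper uses (the latter is what lies behind the paper's terse ``$E_y\subseteq F$'' step).

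Where your argument diverges is in the internal-edge case $e=\{x,z\}\in F'\cap I'$. You split on the processing order of $F$ and $I$ and then try to track replacement chains. The paper avoids this: it invokes the \emph{differentiating cut} $J$ of $e$ (the unique cut in $\F$ for which the endpoints of $e$ land in different sets of $D_J$) and cases on whether $J=F$ and, if not, on which side of $G(\Prof)-F$ the edges of $J$ lie relative to the edges of $I$. This is order-free and short.

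Your version has two genuine gaps. First, the sub-case ``$I$ processed first with $x$ unreplaced in $D_F$'' is in fact vacuous: if some $J$ processed before $I$ differentiates $e$, then $J$ is also in $W$ when $F$ is processed, so the replacement clause fires in $D_F$, contradicting your standing assumption that $x$ is unreplaced there. You did not notice this and instead asserted that ``the parallel-cut structure pins down the replacement vertex as $x$,'' which is neither needed nor justified. Second, the case where $x$ enters $F_\cup$ \emph{as} a replacement vertex is dismissed with ``the same reasoning applied along the chain of replacements''; this is not a proof. In that case some earlier $K$ differentiates $e$ and the replacement vertex in $D_F$ is $x$, and you must argue that the $D_I$ construction, confronted with the same $e$, also selects $x$ --- which requires locating the edges of $K$ and of $I$ relative to each other and to $F$. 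The paper's differentiating-cut argument handles precisely this without any chain induction; reworking your proof to case on the position of the differentiating cut rather than on processing order would close both gaps.
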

\begin{proof}
Let $E_F$ be the set of all edges of $F$ that have $x$ as an endpoint and let $E_I$ be the set of all edges of $I$ that have $x$ as an endpoint. Since $x$ is in $F_\cup$ and in the component of $G(\Prof)-I$ which does not contain edges of $F$, $E_F \subseteq E_I \subseteq I$. If $|E_I| > 1$, then $x \in I_\cap$. Assume that $|E_I| =1$ and let $e=\{x, y\}$ be the edge with endpoint $x$ in $I$. Since $E_F \subseteq E_I$ and $E_F \geq 1$, $e \in F$ and $|E_F| = 1$.  

If $y$ is a leaf vertex, then $x \in I_\cap$, so assume that $y$ is not a leaf vertex.  Let $E_y$ represent the set of edges of $I$ with $y$ as an endpoint. If $|E_y | > 1$, then $x \notin F_\cup$ since $E_y \subseteq F$. Thus, $|E_y| =1$. Let $J$ be the cut that differentiates edge $e$. If $F = J$ then by construction, $x \in I_\cap$. Thus, assume that $F \neq J$. If $J$ is in the same connected component of $G(\Prof) - F$ as $I$, then by construction $x \notin F_\cup$, which is a contradiction. Thus, $J$ is in the connected component of $G(\Prof)-F$ which does not contain $I$ and by construction, $x \in I_\cap$.
\end{proof}

\begin{lemma}\label{lm:forbidden_edges}
Let $D_F = (X,Y)$ for some $F \in \F$.  Let $A$ and $B$ be the connected components of $G(\Prof)-F$ where $\{X \setminus F_\cap\} \subseteq V(A)$ and $\{Y \setminus F_\cap\} \subseteq V(B)$. There does not exist an edge $\{u, v\}$ in $G'$ where
\begin{enumerate}
\item $u \in V(A) \setminus F_\cap$ and $v \in V(B) \setminus Y$, or
\item $u \in V(B) \setminus F_\cap$ and $v \in V(A) \setminus X$
\end{enumerate}
\end{lemma}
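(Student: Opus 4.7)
The plan is to assume for contradiction that an edge $\{u,v\} \in E(G')$ satisfies condition~(1) of the lemma; condition~(2) then follows by the evident symmetry $A \leftrightarrow B$, $X \leftrightarrow Y$. Every edge of $G'$ arises from one of four sources: (a) an original edge of $G(\Prof)$; (b) a fill-in from making $X_{F'}$ or $Y_{F'}$ a clique for some $F' \in \F$; (c) a fill-in from making a set $Z \in O_{F'}$ a clique; (d) a fill-in from making $N_{G(\Prof)}(\ell)$ a clique for a leaf $\ell$. I will rule out each source in turn.

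For source~(a), since $u$ and $v$ lie in different components of $G(\Prof)-F$, the edge $\{u,v\}$ must belong to $F$, and I plan to walk through the construction of $D_F$ case by case: (i) $\{u,v\}$ is internal, the unique $F$-edge of its tree, and no previously processed $D_I$ differentiates it — then $u \in X$ and $v \in Y$ directly; (ii) same as (i) but differentiated by some $D_I$ — then whichever endpoint of $\{u,v\}$ lies in the component $Q$ of $G(\Prof)-I$ containing edges of $F$ is inserted into $X \cap Y = F_\cap$; (iii) $\{u,v\}$ is non-internal and the unique $F$-edge of its tree — its non-leaf endpoint is added to $F_\cap$; (iv) the tree $T$ containing $\{u,v\}$ has several $F$-edges — their common endpoint is one of $u,v$ and is added to $F_\cap$. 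In every sub-case either $u \in F_\cap$ or $v \in F_\cap \subseteq Y$, contradicting $u \in V(A) \setminus F_\cap$ or $v \in V(B) \setminus Y$.

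For sources~(b) and~(c), note $Z \in O_{F'}$ satisfies $Z \subseteq X_{F'} \cup Y_{F'}$, so in either source $u,v \in X_{F'} \cup Y_{F'}$ for some $F' \in \F$. If $F' = F$, then $\{u,v\} \subseteq X$ or $\{u,v\} \subseteq Y$, and the inclusions $X \setminus F_\cap \subseteq V(A)$, $Y \setminus F_\cap \subseteq V(B)$ combined with $v \in V(B)$ force $v \in F_\cap \subseteq Y$, a contradiction. If $F' \neq F$, parallelism of $F$ and $F'$ gives a component $Q$ of $G(\Prof)-F'$ that contains no edges of $F$; a short connectivity argument (any $V(A)$-to-$V(B)$ path in $Q$ would need an $F$-edge) shows $V(Q) \subseteq V(A)$ or $V(Q) \subseteq V(B)$, and correspondingly exactly one of $A, B$ (call it $C$) is free of $F'$-edges. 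Lemma~\ref{lm:helper_lemma} applied with $F'$ and $F$ playing the roles of the lemma's ``$F$'' and ``$I$'' then says that every vertex of $X_{F'} \cup Y_{F'}$ lying in $C$ belongs to $F_\cap$. If $C = B$, this applies to $v$ and yields $v \in F_\cap \subseteq Y$; if $C = A$, it applies to $u$ and yields $u \in F_\cap$; either conclusion contradicts the hypothesis.

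For source~(d), both $\{\ell,u\}$ and $\{\ell,v\}$ are edges of $G(\Prof)$ with $\ell$ a leaf. Since $\ell$ lies in precisely one of $V(A), V(B)$, exactly one of these two edges crosses $F$ and hence belongs to $F$; that crossing edge is non-internal, and because $\ell$ has degree one in its tree it cannot serve as the shared common endpoint of several $F$-edges of that tree. Consequently the construction places the non-leaf endpoint of the crossing edge — which is $u$ or $v$ — into $F_\cap$, once again contradicting the hypothesis. I anticipate the main difficulty to lie in sources~(b)/(c): one must first exploit parallelism of $F$ and $F'$ to locate the side of $F$ that is free of $F'$-edges, and then orient Lemma~\ref{lm:helper_lemma} so that the endpoint forced into $F_\cap$ is exactly the one the hypothesis asserts is not in $F_\cap$ (for $u$) or not in $Y$ (for $v$).
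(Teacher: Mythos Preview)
Your proof is correct and follows essentially the same approach as the paper: split on whether $\{u,v\}$ is an original edge of $G(\Prof)$ or a fill-in, and in the fill-in case locate a cut $I\neq F$ with $\{u,v\}\subseteq I_\cup$ and invoke Lemma~\ref{lm:helper_lemma} to force one endpoint into $F_\cap$. Your treatment is in fact more thorough than the paper's (you explicitly handle the leaf-neighborhood fill-ins of source~(d), which the paper's sentence ``by construction there exists a cut $I\in\F$ \dots'' silently skips); the only blemish is the intermediate claim ``$\{u,v\}\subseteq X$ or $\{u,v\}\subseteq Y$'' in the $F'=F$ sub-case, which need not hold for source~(c) since a set $Z\in O_F$ mixes $x_i$'s and $y_j$'s---but this sub-case is vacuous anyway, because $v\in V(B)\setminus Y$ already forces $v\notin F_\cup$, so $F'=F$ cannot occur.
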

\begin{proof}
Assume that there exists an edge $e=\{u, v\}$ in $G'$ which satisfies one of the two conditions. Without loss of generality, assume that $u \in V(A) \setminus F_\cap$ and $v \in V(B) \setminus Y$.  If $e \in E(G(\Prof))$, then $e \in F$ and hence, by construction, at least one of $u$ and $v$ should be in $F_\cup$. But $v \notin Y$ and so, $u \in F_\cap$ which is a contradiction. Thus, $e$ is a fill-in edge. Note that $e \not \subseteq F_\cup$.  So, by construction there exists a cut $I \in \F$ where $I \neq F$ and $e \subseteq I_\cup$. 

If $E(A) \cap I \neq \emptyset$, then by Lemma~\ref{lm:helper_lemma}, $v \in F_\cap$, which is a contradiction. Thus, assume that $E(B) \cap I \neq \emptyset$. Then, by Lemma~\ref{lm:helper_lemma}, $u \in F_\cap$ which is a contradiction. Thus, such an edge $e$ cannot exist.
\end{proof}

\begin{lemma}\label{lm:vf_legal}
Let $F$ be a cut of $\F$ and let $H$ represent the subgraph of $G'$ induced by vertices of $F_\cup$. Then, we have the following.
\begin{enumerate}
\item  $H$ is triangulated.
\item The is no illegal clique in $H$.
\end{enumerate}
\end{lemma}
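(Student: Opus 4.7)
Write $F_\cup = \{x_1,\ldots,x_m\} \cup \{y_1,\ldots,y_m\} \cup F_\cap$, where $\{x_i,y_i\}$ are the non-differentiated single-tree internal edges of $F'$; by the definition of $F'$ and LC1, these $m$ edges live in $m$ distinct input trees, and by construction each set $O_j = \{x_1,\ldots,x_j,y_j,\ldots,y_m\}\cup F_\cap$ is a clique in $G'$, with $X \subseteq O_m$ and $Y \subseteq O_1$. Along the sequence $O_1, O_2, \ldots, O_m$ the coherence property holds trivially: each $x_i$ appears exactly in $O_i,\ldots,O_m$, each $y_i$ in $O_1,\ldots,O_i$, and each $z_k \in F_\cap$ in all of them. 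So to establish chordality of $H$ it suffices to show that every edge of $H$ lies in some $O_j$, since then $O_1, O_2, \ldots, O_m$ serves as a clique path.

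I would verify this containment by splitting the edges of $H$ into three sources. (a) Original $G(\Prof)$-edges among $F_\cup$: because distinct $\{x_i,y_i\}$ live in distinct input trees and internal vertices of a profile are not shared across trees, the only $G(\Prof)$-edges within $F_\cup$ are the $\{x_i,y_i\}$ themselves (each in $O_i$) and edges with an $F_\cap$ endpoint (in every $O_j$). (b) Leaf-neighborhood edges: if $u,v \in F_\cup \cap N_{G(\Prof)}(\ell)$ and neither $\{u,\ell\}$ nor $\{v,\ell\}$ lies in $F$, then the path $u$--$\ell$--$v$ in $G(\Prof)-F$ forces $u,v$ into the same component of $G(\Prof)-F$, hence both into $X$ or both into $Y$; the remaining cases place at least one endpoint of the edge in $F_\cap$ and so the edge trivially sits in every $O_j$. (c) Edges introduced by cliques $X_I, Y_I$ and members of $O_I$ of another cut $I \in \F$: Lemma~\ref{lm:helper_lemma} forces any vertex of $F_\cup \cap I_\cup$ to lie either in $I_\cap$ or on the side of $G(\Prof)-I$ containing edges of $F$, and combining this with parallelism of $F$ and $I$ confines any such edge between $F_\cup$-vertices to a single $O_j$.

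For the second part, every maximal clique of $H$ is one of the $O_j$. Since $F_\cup$ contains only internal vertices of $G(\Prof)$, no fill-in edge in $H$ can have a leaf endpoint, so the only illegal-clique condition at stake is the presence of two $G(\Prof)$-edges in some $O_j$ where one of them is internal. The only internal $G(\Prof)$-edge in $O_j$ is $\{x_j,y_j\}$, by part (a) of Step 1. Any other $G(\Prof)$-edge in $O_j$ would have to belong to the tree $T_j$ containing $\{x_j,y_j\}$; but by the choice of $F'$, $\{x_j,y_j\}$ is the \emph{unique} edge of $T_j$ in $F$, while the remaining vertices of $O_j$ come either from other input trees (endpoints of other edges of $F'$) or from non-internal or multi-edge-tree contributions to $F_\cap$ coming from trees other than $T_j$. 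So no further $G(\Prof)$-edge of $T_j$ can sit inside $O_j$, and $H$ contains no illegal clique.

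The main obstacle will be part (c) of Step~1: carefully accounting for how a vertex of $F_\cup$ can simultaneously belong to $I_\cup$ for a different $I \in \F$, and using Lemma~\ref{lm:helper_lemma} together with parallelism of $\F$ to rule out any "cross" edge $\{x_i,y_j\}$ with $i > j$ being introduced inside $F_\cup$ by the triangulation cliques attached to the other cuts of $\F$.
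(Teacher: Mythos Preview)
Your proposal is correct and shares its core with the paper's proof: both hinge on showing that $H$ contains no edge $\{x_i,y_j\}$ with $i>j$, and both deduce this from Lemma~\ref{lm:helper_lemma} applied to the second cut $I\in\F$ whose cliques would have to supply such an edge. Where you diverge is in how chordality is extracted. The paper argues directly on a putative chordless $4$-cycle, observing it must use exactly two vertices from each of $X$ and $Y$ and then forcing a chord from the index inequalities. You instead note that $O_1,\ldots,O_m$ form a path of cliques with the coherence property, so that once every edge of $H$ is placed in some $O_j$, chordality (indeed, interval-graph structure) is immediate. This packaging is cleaner, and it also streamlines part~2: every maximal clique of $H$ is then some $O_j$, and each $O_j$ contains exactly one $G(\Prof)$-edge, namely $\{x_j,y_j\}$. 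Your explicit treatment of leaf-neighbourhood fill-in in case~(b) is more careful than the paper, which silently assumes such edges cannot create a bad $\{x_i,y_j\}$; the justification (that $\{x_i,\ell\}$ or $\{y_j,\ell\}$ lying in $F$ would give the corresponding tree two edges in $F$, contradicting membership in $F'$) is exactly what your case split supplies.
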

\begin{proof}
Let $D_F = (X,Y)$ and let $A$, $B$ be the connected components of $G(\Prof)-F$. Let $X=\{x_1, \cdots, x_m, z_1, \cdots, z_p\}$ and $Y =  \{y_1, \cdots, y_m, z_1, \cdots, z_p\}$ where for every $i \in [m]$, $x_i \in V(A)$, $y_i \in V(B)$ and $\{x_i, y_i\}$ is an internal edge of $G(\Prof)$. 

We will first prove that for any $i \in [m]$, $j \in [m]$ where $i > j$, there does not exist an edge $e=\{x_i, y_j\}$ in $H$. Assume that $e \in E(H)$. Edges $e_1=\{x_i, y_i\}$ and $e_2=\{x_j, y_j\}$ are differentiated by $F$. Thus, $e$ is not in $E(G(\Prof))$ and is a fill-in edge. Since there does not exist a set in $O_F$ which contains both $x_i$ and $y_j$, there exists a cut $I \in \F$ where $e \subseteq I_\cup$. Since $F$ and $I$ are parallel, edges of $I$ are either in component $A$ or $B$ but not both. Assume that $I \cap E(A) \neq \emptyset$. Then by Lemma~\ref{lm:helper_lemma}, $y_j \in F_\cap$, which is a contradiction. Similarly, if $I \cap E(B) \neq \emptyset$, then by Lemma~\ref{lm:helper_lemma}, $x_i \in F_\cap$ which is a contradiction. Thus, there cannot be such a fill-in edge $e$.

Let $C$ be a chordless cycle of length at least four in $H$. Sets $X$ and $Y$ are cliques in $G'$. Thus, if $C$ contains more than two vertices from one of $X$ or $Y$, $C$ must contain a chord. Hence, $C$ has exactly four vertices and contains exactly two vertices each from $X$ and $Y$. Note that $C$ cannot contain vertex $z_i$ for any $i \in [p]$. Let $x_i$,$x_j$ be the vertices of $X$ in $C$ where $1 \leq i < j \leq m$. Similarly, let $y_{i'}$,$y_{j'}$ be the vertices of $Y$ in $C$ where $1 \leq i' < j' \leq m$.  We have the following cases. If $i \leq i'$, then $\{x_1, \cdots, x_{i}, y_{i}, \cdots, y_m, z_1, \cdots, z_p\} \in O_F$ and thus vertices $x_i, y_{i'}, y_{j'}$ form a clique. Hence, $C$ is not chordless which is a contradiction. If $i > i'$, then from the above argument neither of the edges $\{x_i, y_{i'}\}$ and $\{x_j, y_{i'}\}$ can exist. Thus, vertex $y_{i'}$ cannot be in $C$ which is a contradiction.

Assume that $H$ contains an illegal clique $H'$. Thus, $H'$ contains two internal edges $e$ and $e'$. By construction, $F_\cup$ cannot contain a leaf vertex. By legality of cuts and from the construction of $F_\cup$, edges $e$ and $e'$ are from different input trees and both are differentiated by $F$.  Let $e = \{x_i, y_i\}$ for some $i \in [m]$ and let $e' = \{x_j, y_j\}$ for some $j \in [m]$. Without loss of generality, assume that $i < j$. As proved above, there cannot exist an edge between vertices $x_j$ and $y_i$ in $H$ and thus $H'$ is not a clique which is a contradiction. Thus, $H$ does not contain an illegal clique.
\end{proof}

\begin{lemma}\label{lm:C_in_F}
G' is chordal.
\end{lemma}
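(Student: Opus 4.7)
The plan is to prove chordality by contradiction. Suppose $G'$ contains a chordless cycle $C$ of length at least four, and pick $C$ of minimum length among all such cycles. The strategy is to exploit two structural facts: (i) each $F_\cup$ is a vertex separator in $G'$, and (ii) each $F_\cup$ is already chordal on its own.

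First I would establish the separator property. By Lemma~\ref{lm:forbidden_edges}, any edge of $G'$ between $V(A)\setminus F_\cap$ and $V(B)\setminus F_\cap$ must have its endpoints in $X$ and $Y$ respectively, hence inside $F_\cup$. Consequently, for every $F \in \F$, removing $F_\cup$ disconnects $V(A) \setminus F_\cup$ from $V(B)\setminus F_\cup$ in $G'$. This immediately disposes of two easy cases: if $V(C) \subseteq F_\cup$ for some $F \in \F$, Lemma~\ref{lm:vf_legal} produces a chord of $C$, a contradiction; and if $V(C) \subseteq N_{G(\Prof)}[\ell]$ for some leaf $\ell$, the fact that $N_{G(\Prof)}(\ell)$ was made a clique in $G'$ forces a chord.

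Next, I would show that there exists a cut $F \in \F$ which \emph{straddles} $C$, that is, $V(C) \cap (V(A)\setminus F_\cup) \ne \emptyset$ and $V(C)\cap (V(B)\setminus F_\cup) \ne \emptyset$. The cuts in $\F$ organize $V(G(\Prof))$ in a laminar way (each additional cut refines the partition defined by previous cuts on one of the two sides), so since $V(C)$ does not fit inside any single $F_\cup$, iterating this refinement must eventually produce an $F$ that splits $V(C)$ between its two sides. Given such a straddling $F$, the set $V(C)\cap F_\cup$ must act as a separator inside $C$, so it has at least two vertices.

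From here I would derive the required chord by a short case analysis. Since $X$ and $Y$ are cliques in $G'$, any two non-consecutive vertices of $V(C)$ in $X$ (or in $Y$) give a chord, so at most two consecutive vertices of $C$ lie in $X$, and likewise for $Y$. A vertex of $V(C) \cap (X \setminus F_\cap)$ lies in $V(A)$ and, by Lemma~\ref{lm:forbidden_edges}, has no $G'$-neighbor in $V(B)\setminus F_\cup$; symmetrically for $Y \setminus F_\cap$. Thus each separating vertex of $V(C)\cap F_\cup$ that needs neighbors on both sides of $C$ must lie in $F_\cap \subseteq X \cap Y$. Two such separating vertices then lie in the clique $X$ (equivalently $Y$), so they are adjacent in $G'$; since they are not consecutive in $C$, this edge is a chord, contradicting the choice of $C$. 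The intermediate cliques in $O_F$ provide the adjacencies needed to rule out the remaining mixed configurations where one of the separators is in $X\setminus Y$ and the other in $Y\setminus X$.

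The main obstacle will be the second step: rigorously arguing that some $F \in \F$ straddles $C$. This requires turning the pairwise-parallel structure of $\F$ into a concrete tree (or laminar) decomposition of $V(G(\Prof))$ and showing that $V(C)$, being contained in no single $F_\cup$, cannot nest strictly inside every side choice without eventually producing a straddling cut. The remaining chord-finding step is essentially local to a single $F_\cup$ and is controlled entirely by Lemmas~\ref{lm:forbidden_edges} and~\ref{lm:vf_legal} together with the clique structure imposed by $X$, $Y$, and $O_F$.
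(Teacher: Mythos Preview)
Your outline diverges from the paper's proof in a substantive way. The paper does \emph{not} try to locate a single cut that straddles $C$. Instead it splits into two cases on a different predicate: whether some $F\in\F$ has $V(C)$ meeting both $X\setminus F_\cap$ and $Y\setminus F_\cap$. In Case~A (yes) the argument is local to that one $F$, using Lemma~\ref{lm:forbidden_edges} and Lemma~\ref{lm:vf_legal} much as you suggest. In Case~B (no) the paper abandons the single-cut viewpoint entirely: it takes four consecutive vertices $x_1,x_2,x_3,x_4$ of $C$, finds three distinct cuts $F^{(1)},F^{(2)},F^{(3)}$ with $\{x_i,x_{i+1}\}\subseteq F^{(i)}_\cup$, and then runs a case analysis on how $F^{(3)}$ sits relative to $F^{(1)}$ and $F^{(2)}$, repeatedly invoking Lemma~\ref{lm:helper_lemma} to force some $x_j$ into an $F^{(\cdot)}_\cap$ and hence produce a chord.

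The obstacle you flag in step~4 is real, and it is exactly what the paper's Case~B is engineered to avoid. Your laminar-refinement sketch (``iterating this refinement must eventually produce an $F$ that splits $V(C)$'') is not a proof: you would need to show that $V(C)$ cannot sit inside $V(A_F)\cup F_\cup$ for \emph{every} $F$ simultaneously, and the parallel structure of $\F$ alone does not give this without further work (in particular you will need Lemma~\ref{lm:helper_lemma}, which you never mention, to control how membership in one $F_\cup$ constrains membership in another). Your step~5 is also incomplete: the assertion that each separating vertex ``must lie in $F_\cap$'' only covers single-vertex crossings; a crossing of $C$ through $F_\cup$ can perfectly well be an edge from $X\setminus F_\cap$ to $Y\setminus F_\cap$, and your appeal to ``the intermediate cliques in $O_F$'' to handle this is exactly the content of Lemma~\ref{lm:vf_legal}'s argument that $\{x_i,y_j\}$ with $i>j$ is never an edge---which you would have to redo. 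In short, your plan can likely be pushed through, but as written both steps~4 and~5 are genuine gaps, and filling them would bring you close to the paper's Case~A/Case~B structure anyway.
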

\begin{proof}
Assume the contrary. Let $C$ be a chordless cycle of length at least 4 in $G'$. By construction, $C$ cannot contain a leaf vertex. We have the following cases. 

Suppose that there exist vertices $\{u, v\} \subset V(C)$ and a cut $F \in \F$ where if $D_F = (X,Y)$, then $u \in X \setminus F_\cap$ and $v \in Y \setminus F_\cap$. Let $A$, $B$ be the connected components of $G(\Prof)-F$ where $u \in V(A)$ and $v \in V(B)$. We have two cases. 

\begin{enumerate}[{Case} 1:]
\item $C$ contains a vertex $x \in F_{\cap}$. Then, there exists a path $u, x, v$ in $C$.    Because $C$ is a cycle, there must exist an edge between a vertex $u' \in V(A) \setminus {x}$ and $v' \in V(B) \setminus {x}$. Since $C$ is chordless, $u' \notin F_\cap$ and $v' \notin F_\cap$. Thus, $u' \in V(A) \setminus F_\cap$ and $v' \in V(B) \setminus F_\cap$. By Lemma~\ref{lm:forbidden_edges}, if $u' \in V(A) \setminus X$ then there cannot exist an edge between $u'$ and $v'$. Thus, $u' \in X \setminus F_\cap$. Similarly, $v' \in Y \setminus F_\cap$. If $u \neq u'$ or $v \neq v'$, $C$ cannot be chordless. Thus, $u = u'$ and $v = v'$ and $C$ is a chordless cycle of length 3 which is a contradiction.

\item $C$ does not contain a vertex of $F_{\cap}$. Since $u \in V(A)\setminus F_{\cap}$, $v \in V(B) \setminus F_{\cap}$ and $F$ is a cut, there must exist two edges $e_1= \{x_1, y_1\}$ and $e_2=\{x_2, y_2\}$ in $C$ where $\{x_1, x_2\} \subseteq V(A) \setminus F_{\cap}$ and $\{y_1, y_2\} \subseteq V(B) \setminus F_{\cap}$. If $x_1 \in V(A) \setminus X$, then by Lemma~\ref{lm:forbidden_edges} there cannot exist an edge between $x_1$ and $y_1$. Thus, $x_1 \in X \setminus F_\cap$. Similarly, $x_2 \in X \setminus F_\cap$ and $\{y_1, y_2\} \subseteq Y \setminus F_\cap$. Since vertex sets of $X$ and $Y$ are cliques in $G'$, there exist edges $\{x_1, x_2\}$ and $\{y_1, y_2\}$. Thus, there cannot exist any other vertex in $C$ and hence $V(C) \subseteq F_\cup$. But, by Lemma~\ref{lm:vf_legal} subgraph of $G'$ induced by vertices of $F_\cup$ is triangulated. Thus, $C$ is not chordless which is a contradiction.
\end{enumerate}

Now assume that for every cut $F \in \F$, where $D_F = (X,Y)$, there do not exist two vertices $u, v \in V(C)$ where $u \in X \setminus F_\cap$ and $v \in Y \setminus F_\cap$. This also implies that, for every cut $F \in \F$ at most two vertices of $V(C)$ are in $F_\cup$. Let $x_1, x_2, x_3, x_4$ be a path of length four in $C$. Also, for every $i \in \{1,2,3\}$, let $F^{(i)} \in \F$ be the cut where $\{x_i, x_{i+1}\} \subseteq F^{(i)}_\cup$. Note that such cuts must exist and must be distinct. For every $i \in \{1,2,3\}$, let $A_i$ and $B_i$ be the connected components of $G(\Prof)-F^{(i)}$. Without loss of generality, assume that $E(A_1) \cap F^{(2)} \neq \emptyset$ and $E(B_2) \cap F^{(1)} \neq \emptyset$. We have the following cases.

\begin{enumerate}[{Case} 1:]
\item $F^{(3)} \cap E(A_2) \neq \emptyset$. If $x_1 \in A_2$, then by Lemma~\ref{lm:helper_lemma}, $x_1 \in F^{(2)}_\cap$ and $C$ is not chordless, which is a contradiction. Thus, $x_1 \in B_2$. Similarly, if $x_4 \in B_2$, by Lemma~\ref{lm:helper_lemma}, $x_4 \in F^{(2)}_\cap$ and $C$ is not chordless, which is a contradiction. Thus, $x_4 \in A_2$.  Since $C$ is a cycle, $F^{(2)}$ is a minimal cut and $\{F^{(2)}_\cup \setminus \{x_2, x_3\}\} \cap V(C) = \emptyset$, there exists an edge $\{v_1, v_2\}$ in $C$ where $v_1 \in V(A_2) \setminus F^{(2)}_\cup$ and $v_2 \in V(B_2) \setminus F^{(2)}_\cup$. But, by Lemma~\ref{lm:forbidden_edges}, such an edge cannot exist.

\item $F^{(3)} \cap E(A_1) \neq \emptyset$ and $F^{(3)} \cap E(B_2) \neq \emptyset$. Without loss of generality let $A_3$, $B_3$ be the connected components of $G(\Prof) -F^{(3)}$ that contain $F^{(2)}$ and $F^{(1)}$ respectively. Assume that $x_2 \in A_3$. Since $x_2 \in F^{(1)}_\cup$, by Lemma~\ref{lm:helper_lemma}, $x_2 \in F^{(3)}_\cap$. Then, there exists an edge $\{x_2, x_4\}$ and $C$ is not chordless, which is a contradiction. Thus, $x_2 \in B_3$. But $x_2 \in F^{(2)}_\cup$ and thus, by Lemma~\ref{lm:helper_lemma}, $x_2 \in F^{(3)}_\cap$. Hence, there exists a chord $\{x_2, x_4\}$ and $C$ is not chordless, which again is a contradiction.

\item $F^{(3)} \cap E(B_1) \neq \emptyset$. Renaming vertices $x_1$, $x_2$, $x_3$ and $x_4$ as, $x_4$, $x_3$, $x_2$ and $x_1$, respectively, brings us back to case 2.
\end{enumerate} 
Thus, $G'$ does not contain a chordless cycle of length 4 or greater and hence $G'$ is chordal.
\end{proof}

\begin{proof}[Proof of Theorem~\ref{thm:lt_convert}]
From Lemma~\ref{lm:C_in_F}, $G'$ is triangulated. We now prove that triangulation $G'$ is legal.

By construction, we do not add any fill-in edge with a leaf vertex as an endpoint. Thus, condition $(LT2)$ is true for $G'$. Assume that there exists a clique $H$ with two internal edges $e=\{x_1, y_1\}$ and $e'=\{x_2, y_2\}$. Let $F$ be the cut which differentiates $e$. Let $A$ and $B$ be the connected components of $G(\Prof)-F$ where $x_1 \in V(A)$ and $y_1 \in V(B)$. By Lemma~\ref{lm:vf_legal}, both the endpoints of $e'$ cannot be in $F_\cup$. Without loss of generality, assume that $x_2 \notin F_\cup$ and $x_2 \in A$. Since $x_2 \notin F_\cup$ and $y_1 \notin F_\cap$, by Lemma~\ref{lm:forbidden_edges}, there cannot exist an edge between $x_2$ and $y_1$ in $G'$. Thus, $H$ is not a clique of $G'$ which is a contradiction.
\end{proof}

\section{Conclusion}
We have shown that the characterization of tree compatibility in terms of restricted triangulations of the edge label intersection graph transforms into a characterization in terms of minimal cuts in the display graph. We have also shown how these two characterizations relate to the characterization in terms of legal triangulations of the display graph~\cite{Vakati11}.

It remains to be seen whether any of these characterizations can be exploited to derive an explicit fixed parameter algorithm for the tree compatibility problem when parametrized by number of trees. Grunewald et al.~\cite{Grunewald2008} use quartet graphs to characterize when a collection of quartets define and identify a compatible supertree. An interesting question is whether a similar characterization can be derived for collections of phylogenetic trees using display graphs or edge label intersection graphs.
\section*{Acknowledgments}
We thank Sylvain Guillemot for his valuable suggestions and comments.
\bibliographystyle{abbrv}
\bibliography{cuts_modified}
\end{document}